\documentclass[aps,pra,
nofootinbib,longbibliography,notitlepage,
 11pt,
floatfix]{revtex4-1}

\usepackage{amsmath,amsthm,amsfonts,amssymb}
\usepackage{graphicx}
\usepackage{microtype}
\usepackage{braket}

\usepackage{xypic}

\usepackage[usenames,dvipsnames]{color}
\usepackage[colorlinks=true,citecolor=blue,linkcolor=magenta]{hyperref}
\usepackage{placeins}

\numberwithin{equation}{section}
\numberwithin{figure}{section}
\numberwithin{table}{section}

\newtheorem{theorem}{Theorem}[section]
\newtheorem{lemma}[theorem]{Lemma}

\theoremstyle{definition}
\newtheorem{definition}[theorem]{Definition}

\DeclareMathAlphabet{\mathpzc}{OT1}{pzc}{m}{it}

\newcommand{\FF}{\mathbb{F}}
\newcommand{\ZZ}{\mathbb{Z}}
\newcommand{\calS}{\mathcal{S}}
\newcommand{\calG}{\mathcal{G}}
\newcommand{\calQ}{\mathcal{Q}} 

\newcommand{\fa}{\mathfrak{a}}
\newcommand{\fb}{\mathfrak{b}}
\newcommand{\fc}{\mathfrak{c}}

\newcommand{\norm}[1]{\left\| {#1}\right\|}

\newcommand{\ncheck}{{n_\text{\rm c}}}
\newcommand{\nout}{{n_\text{\rm out}}}
\newcommand{\nin}{n_\text{\rm inner}}

\newcommand{\kin}{k_\text{\rm inner}}

\newcommand{\nT}{n_T}
\newcommand{\CF}{\mathfrak{C}}

\begin{document}

\title{Towers of generalized divisible quantum codes}
\author{Jeongwan Haah}\email{jwhaah@microsoft.com}
\affiliation{Quantum Architectures and Computation, Microsoft Research, Redmond, Washington, USA}


\begin{abstract}
A divisible binary classical code is one in which 
every code word has weight divisible by a fixed integer.
If the divisor is $2^\nu$ for a positive integer $\nu$,
then one can construct a Calderbank-Shor-Steane (CSS) code,
where $X$-stabilizer space is the divisible classical code,
that admits a transversal gate in the $\nu$-th level of Clifford hierarchy.
We consider a generalization of the divisibility 
by allowing a coefficient vector of odd integers
with which every code word has zero dot product modulo the divisor.
In this generalized sense,
we construct a CSS code with divisor $2^{\nu+1}$ and code distance $d$
from any CSS code of code distance $d$ and divisor $2^\nu$
where the transversal $X$ is a nontrivial logical operator.
The encoding rate of the new code 
is approximately $d$ times smaller than that of the old code.
In particular, for large $d$ and $\nu \ge 2$,
our construction yields a CSS code of parameters $[[O(d^{\nu-1}), \Omega(d),d]]$
admitting a transversal gate at the $\nu$-th level of Clifford hierarchy.
For our construction we introduce a conversion from 
magic state distillation protocols based on Clifford measurements 
to those based on codes with transversal $T$-gates.
Our tower contains, as a subclass, generalized triply even CSS codes
that have appeared in so-called gauge fixing or code switching methods.
\end{abstract}

\maketitle

\section{Introduction}

The role of quantum error correcting codes is to protect states of qubits
and operate on them reliably. 
The most straightforward way of implementing fault-tolerant operations on encoded states
is to apply a tensor product operator on a code to induce desired action on the encoded state.
However, any nontrivial error correcting code cannot admit a universal set of operations this way~\cite{EastinKnill2009Restrictions},
and thus one has to inject magic states~\cite{Knill2004a,BravyiKitaev2005Magic},
or modify the code space dynamically%
~\cite{PaetznickReichardt2013Universal,
Bombin2015,
BravyiCross2015,
JochymOConnorBartlett2016,
Jones2016},
or seek for some other method to implement fault-tolerant operations~\cite{KnillLaflammeZurek1996Threshold,Yoder2016}.
Codes that admits Clifford operations are well understood, and concrete examples exist~\cite{Steane_1996,Bombin_2006},
but those that admit non-Clifford gates require nonlinear constraints,
making analysis and construction more difficult. 
For instance, a CSS code with transversal $T$ gate gives rise to a set of cubic equations 
on the stabilizer space~\cite{BravyiHaah2012Magic}.

Specifically, a CSS code of which any $X$-stabilizer has weight multiple of $2^\nu$
admits a transversal logical gate $T_\nu = \mathrm{diag}(1,\exp(2\pi i/2^\nu))$ 
in the $\nu$-level of Clifford hierarchy%
\footnote{
	The Clifford hierarchy~\cite{ChuangGottesman1999}
	is defined inductively by setting $\mathcal P_1$ to be the set of all Pauli operators on $n$ qubits,
	and $\mathcal P_\nu = \{ U \in \mathbb U( 2^n ) ~:~ U \mathcal P_1 U^\dagger \subseteq \mathcal P_{\nu -1} \} $
	for $\nu \ge 2$. The second level $\mathcal P_2$ is the Clifford group.
} 
$\mathcal P_\nu$~\cite{BravyiHaah2012Magic,LandahlCesare2013Tgate}
(the induced action of the transversal gate on the encoded qubit
may depend on a particular choice of the logical qubits, 
especially if there are many logical qubits in the code).
We call such a CSS code {\it level-$\nu$  divisible}~\cite{Ward1990}.

The purpose of this paper is to give a recipe for constructing CSS~\cite{CalderbankShor1996Good,Steane_1996}
codes that admit a transversal logical gate 
at the third and higher levels $\mathcal P_\nu$ of Clifford hierarchy.
We present a method to combine codes with a transversal gate (in fact, a diagonal phase gate)
at the $\nu$-th level, to make a code with a transversal gate at the $(\nu+1)$-th level.
Our level-lifting is a result of close examination on magic state distillation protocols.

There has been a number of interesting coincidences observed among distillation protocols.
It was pointed out~\cite{Reichardt2005} that
Knill's magic distillation scheme using the Steane code $[[7,1,3]]$
has the same output error probability as the Bravyi and Kitaev's scheme using the punctured Reed-Muller code 
$[[15,1,3]]$~\cite{Knill2004a,BravyiKitaev2005Magic}.
It was also observed~\cite{Jones2012} 
that Jones' check-based magic state distillation protocol using the $[[2k+4,2k,2]]$ code
has the same output error probability at the leading order
as Bravyi and Haah's protocol using the $[[6k+8,2k,2]]$ code~\cite{Jones2012,BravyiHaah2012Magic}.
More coincidences were found~\cite[App.~C]{HHPW2017magic}
between apparently distinct magic state distillation protocols,
but the pattern was obvious:
A distillation protocol based on Hadamard measurement routines 
is related to another protocol based on transversal $T=T_3$ gates in the third level $\mathcal P_3$.
Our construction shows that these connections are natural,
once the overall syndrome measurements are expressed as a single binary matrix.

The correspondence between the two classes of distillation protocols
constitutes the bottom of a tower of codes,
and provides an inductive step in the construction of higher levels of the tower.
Roughly speaking, we show that
a code with transversal $T_\nu$ can be used to distill $T_{\nu +1}$,
and a systematic representation of the syndrome measurements of the distillation protocol
gives a code with transversal $T_{\nu+1}$.
Although the construction was motivated by comparing two distillation protocols,
the resulting code does not have to bear an interpretation as a distillation protocol.
It is legitimate to think of our construction as a mapping from a binary matrix with certain divisibility
to a larger binary matrix with a higher level of divisibility.

We will begin by reviewing distillation protocols based on Hadamard measurements in Section~\ref{sec:review}.
The goal here is to give an overview of \cite{HHPW2017magic}.
In Section~\ref{sec:completeCheckMatrix} we find a succinct way 
to express all information of a Hadamard-measurement based distillation protocol
in a single binary matrix.
This can be used to calculate various probabilities for the distillation protocol.
In Section~\ref{sec:orthobases}
we study divisibility and a strong notion of orthogonality on a binary vector space.
This is a variant of \cite{Ward1990,BravyiHaah2012Magic}.
Section~\ref{sec:levellifting} contains the most important calculation of this paper,
establishing the divisibility of the matrix derived in Section~\ref{sec:completeCheckMatrix}.
In Section~\ref{sec:tgates_codes} we define quantum codes and examine transversal gates.
In Section~\ref{sec:examples}, we apply our construction to produce examples of codes,
whose parameters we previously did not know.
A consequence of our construction is that 
the triorthogonal codes of parameters $[[6k+8,2k,2]]$ in \cite{BravyiHaah2012Magic}
does not require a Clifford correction to induce logical $T$ gate.
We conclude with further discussion in Section~\ref{sec:discussion}.

We denote by $\vec 1$ a binary vector whose entries are all 1.
The length (dimension) of $\vec 1$ should be inferred from the context.

\section{Clifford measurements to Check matrices}

\subsection{Review of distillation protocols by Clifford measurements}
\label{sec:review}

One of the earliest ideas of magic state distillation~\cite{Knill2004a}
introduces a measurement in the basis of magic states.
Since the observable that is diagonal in the basis of magic states
cannot be a Pauli operator,
such a measurement routine necessarily includes non-Clifford gates realized by some magic state.
This may sound like circular nonsense,
but the point is that the measurement routine can be implemented inside an error detecting (or correcting) code,
so the non-Clifford gates in the measurement routine can be noisy.
The simplest such protocol is based on the Steane code $[[7,1,3]]$~\cite{Steane_1996}.
The Steane code has a property that the transversal Hadamard gate $H^{\otimes 7}$ induces
a Hadamard gate on the logical qubit.
If we implement controlled Hadamards on the physical qubits approximately,
then the induced controlled Hadamard on the logical qubit will have higher fidelity,
and the post-measurement state, which is a magic state,
will also be of higher fidelity.

This idea has led to many interesting protocols such as 
one that operates on 5 qubits including measurement ancilla
to distill $T$ states~\cite{MeierEastinKnill2012Magic-state},
one that can measure small Fourier angles~\cite{DP2014},
and one that uses a very small number of noisy $T$-gates 
per output in a very high fidelity regime~\cite{Jones2012}.
All these protocols have a common feature 
that they use a special error-detecting code that admits transversal Clifford gates.
Recently, we have introduced general criteria and several constructions for magic state distillation protocols 
based on the Clliford measurements~\cite{HHPW2017magic,enum,rtrio},
which we review here.
For simplicity, let us explain protocols only for the $+1$ eigenstate of Hadamard $H$.
As usual, we may assume that the error model is a dephasing channel in the magic basis,
i.e., an independent noisy magic state suffers from a $Y$ error with probability $p$~\cite{BravyiKitaev2005Magic}.

We separate the problem into two codes, namely inner and outer codes.
Let us first explain inner codes.
An {\bf inner code} $[[\nin,\kin,d]]$
is a weakly self-dual CSS code on $n$ qubits, i.e., $H^{\otimes \nin}$ is a logical operator,
where
we find a decomposition of the code space (a logical operator basis)
such that the transversal Hadamard induces either the product of logical Hadamards over all logical qubits,
or the product of logical swaps.
The two possibilities are exhaustive and mutually exclusive:
If the transversal Hadamard becomes logical Hadamard for a suitable choice of the logical operators,
which is the case when the stabilizer group does {\it not} contain $X^{\otimes \nin}$, the code is called {\bf normal};
otherwise, it is called {\bf hyperbolic}.
For example, 
the Steane code $[[7,1,3]]$ is normal, and a 2D color code~\cite{Bombin_2006} encoding one qubit is normal,
but the 4-qubit code with stabilizers $XXXX,ZZZZ$ is hyperbolic.
One can obtain a normal code from a hyperbolic code by puncturing one qubit~\cite{HHPW2017magic}.

To emphasize a particular choice of logical operators for a normal weakly self-dual CSS code,
we will speak of a {\bf normal basis} of logical operators.
In a normal basis, each pair of $X$- and $Z$-logical operators for a logical qubit $a = 1, \ldots, \kin$ 
is associated with a bit string $\ell^{(a)}$ such that $\ell^{(a)} \cdot \ell^{(b)} = \delta_{ab}$.
Whenever the stabilizer group does not contain $X^{\otimes \nin}$,
one can always find a normal basis of logical operators by a Gram-Schmidt process.
Below and throughout this paper we only consider normal codes with some normal basis.

With a normal inner code, we can implement an {\bf $H$-measurement routine} 
to measure the eigenvalue of $H^{\otimes \kin}$.
The net effect of the measurement routine, 
post-selected on satisfactory Pauli stabilizer measurements (inner syndrome) on the normal inner code,
is the following trace-decreasing map $\mathcal D$ of density matrices.
\[
 \mathcal D(\rho) = \Pi_+ \rho \Pi_+ + O(p^2) \Pi_- \rho \Pi_- + O(p^d) \mathcal L(\rho )
\]
where $\rho$ is a logical state, $\Pi_\pm$ are the projectors onto the $\pm 1$ eigenspace of $H^{\otimes k}$, respectively,
$p$ is the independent error probability of individual $T_Y = \exp( i \pi Y / 8)$ gates%
\footnote{$T_Y$ and $T_3 = \mathrm{diag}(1,e^{i\pi/4})$ are conjugate to each other by a Clifford gate up to a global phase.}
suffering from $Y$ errors
which are used to implement $\mathcal D$,
and $\mathcal L$ is some transformation within the code space of the inner code (i.e., logical errors).
The second term (where the measurement outcome is wrong)
is because the measurement routine makes use of an identity 
\[
^C H = \underbrace{e^{-i\pi Y/8}}_{T_Y^\dagger} (^C Z) \underbrace{e^{i \pi Y/8}}_{T_Y}
\]
where $^CU = \ket 0 \bra 0 \otimes I + \ket 1 \bra 1 \otimes U$ for any unitary $U$,
and if the both $T_Y$-gates have $Y$ errors, then the measurement outcome will be flipped.
Note that we assumed that physical measurements are perfect;
the all possible errors are from $T_Y$-gates.

Given a map $\mathcal D$, we can check the parity of the eigenvalues of $H^{\otimes \kin}$ on two or more magic states,
and obtain a distillation protocol with quadratic error suppression.
If the quadratic error suppression is the goal, 
then it is a good strategy to choose an inner code of as high encoding rate as possible with $d=2$~\cite{Jones2012}.
If higher order error suppression (order $d>2$) is needed,
then we should consider further measurements.
For simplicity, let us restrict to the case where we only have one inner code.
Sending a fixed set of magic states through the map $\mathcal D$ multiple times,
does not improve the fidelity of the output magic states.
Instead, we consider a larger number of input magic states to check,
and specify which subset of the input magic states is given to the map $\mathcal D$.

This is the place where an {\bf outer code} becomes necessary.
Let us represent input magic states to be tested by bits: 
0 if the input magic state is the correct one,
and 1 if the input has $Y$ error on it.
Ideally, we would want the all-0 bit configuration,
but for the purpose of distillation protocol that allows $\ge d$ errors,
it suffices to filter out small weight $<d$ configurations.
Hence, each measurement routine $\mathcal D$ imposes parity checks on the bit configurations,
and the parity checks must define a classical code on the bit configurations of minimum distance $\ge d$.

The distance requirement is necessary but not sufficient.
This is due to the second term in $\mathcal D$;
we cannot always trust a parity check's outcome.
It is thus necessary to put some redundancy into the set of parity checks,
making the whole set of the checks {\bf sensitive}.
If we express the sets of bits that are tested by $H$-measurement routines as the rows of a parity check matrix $M$,
and the bit representation of the input magic states as a binary vector $e$,
then the sensitivity condition together with the minimum distance requirement is expressed as
\begin{align}
	|e| + 2|Me| \ge d \quad \text{ for any nonzero } e 
\label{eq:sensitivity}
\end{align}
where $|\cdot|$ denotes the Hamming weight.
Sometimes this means that $M$ should have linearly dependent rows.
As a simple example, consider input magic states that are arranged on a line of $\nout$ points,
and suppose we want quartic order error suppression using some normal inner code of $d \ge 4$.
Suppose we demand that the outer code be a repetition code of distance $\nout$.
The parity checks for this outer code can be nearest-neighbor parity checks,
but then an erroneous input magic state at one end of the line, which may exist with probability of order $p$,
will be tested only once, and there is overall $O(p^3)$ probability for the faulty magic state to pass the entire test.
Hence we must consider parity checks for nearest neighbors as if the line were a ring.
Then, every input magic state is tested twice, and a mistake from one $H$-measurement routine can be detected by another.

If we use a single inner code of distance $\ge d$,
then \eqref{eq:sensitivity} is a necessary and sufficient condition for a parity check matrix of an outer code
to yield a distillation protocol of error reduction order $\ge d$~\cite{HHPW2017magic}.
However, one can consider various inner codes and design a protocol where input magic states
are tested in a certain order. A full protocol is then defined by a sequence of inner codes 
$[[\nin^{(\alpha)},\kin^{(\alpha)},d^{(\alpha)}]]$,
and a sequence of sets of qubits $M^{(\alpha)}$ that is to be tested by $\alpha$-th inner code.
In this case, one should show that any pattern of errors, including those inside $H$-measurement routines,
that may pass the entire protocol has weight $\ge d$.
An example is a protocol where 
one checks a single input magic state by a sequence of inner codes of distance $d^{(\alpha)} = 2\alpha +1$
with $\alpha = 1,2,3,\ldots,\ncheck$.
This example outputs a magic states whose error rate is reduced to $d$-th order where $d = 2 \ncheck + 1$.

\subsection{Representation of errors and checks}
\label{sec:completeCheckMatrix}

We imagine that $H$-measurement routines are implemented sequentially 
to check $\nout$ magic states.
(They may operate on disjoint sets of qubits,
in which case they can be implemented
in parallel, but we can always regard them 
as a sequential application of $H$-measurements, one at a time.)
Here, we express the entire protocol as a collection of parity checks on all $T_Y$-gates/states
including those that are used to implement the $H$-measurement routines.
We consider normal codes only, and thus each $H$-measurement routine $\alpha$ consumes
$2\nin^{(\alpha)}$ $T_Y$ gates.

Under the stochastic error model,
any possible error is of $Y$-type and hence corresponds to a bit string.
Let $y^{(0)} = (y^{(0)}_1,\ldots, y^{(0)}_\nout )$
denote any error bit string on the magic states that are tested.
Let $y^{(\alpha)}$ for $\alpha = 1,2,\ldots,\ncheck$ denote
the error bit string inside the $\alpha$-th $H$-measurement routine.
Since there are two layers of $T_Y$-gates on the inner code,
$y^{(\alpha)}$ with $\alpha \ge 1$ must have even length;
for notational convenience, 
we let the first half of $y^{(a)}$ to be the error pattern
before the layer of $^CZ$, and the second half to be that after $^CZ$.
Thus, the error pattern in the entire protocol is represented by
a bit vector 
\begin{align}
y =
 \begin{pmatrix}
 	y^{(0)} & y^{(1)} & \cdots & y^{(\ncheck)}
 \end{pmatrix}
\label{eq:errorVector}
\end{align}
of length $\nT = \nout + 2\sum_{\alpha=1}^{\ncheck} \nin^{(\alpha)}$,
which is by definition equal to the total number of $T_Y$-gates/states.
We will regard the error bit vector as a column vector ($\nT$-by-$1$ matrix).

The protocol specifies for each $H$-measurement routine $\alpha$ a set of magic states to be tested.
This set $M^{(\alpha)}$ can be viewed as a bit string of length $\nout$;
$M^{(\alpha)}_i = 1$ if and only if the qubit $i$ is included in the test by the routine $\alpha$.
If an $H$-measurement routine $\alpha$ was perfect,
then it would read off the parity of the error in $M^{(\alpha)}$,
which is equal to $M^{(\alpha)} \cdot y^{(0)}$ over $\FF_2$.
As we will use $M^{(\alpha)}$ as a submatrix of a complete check matrix,
we regard $M^{(\alpha)}$ as a $1$-by-$\nout$ matrix.

To take the outer measurement error into account,
we regard that any error on the pre-$^CZ$ layer is moved to the post-$^CZ$ layer:
\begin{align}
 (^{C_0}Z_j) Y_j = Z_0 Y_j (^{C_0}Z_j) .
\end{align}
This convention allows us to consider only the pre-$^CZ$ errors in regards to the outer syndrome.
That is, if the error pattern on magic states under the test is $y^{(0)}$,
then the outer syndrome by an $H$-measurement routine $\alpha$ is given by
\begin{align}
 M^{(\alpha)} y^{(0)} + 
 \underbrace{\begin{pmatrix} 1 & 1 & \cdots & 1 & 0 & 0 \cdots & 0 \end{pmatrix}}%
 _{(10)\otimes \vec 1_{\nin^{(\alpha)}}}
 y^{(\alpha)} \mod 2
\end{align}

The error pattern $y^{(0)}$ is not necessarily the same 
as we proceed from one $H$-measurement routine to the next.
This is because there could be a logical error from the inner code
that propagates to the magic states under the test.
If a bit string 
\begin{align}
\ell^{(\alpha)} = L^{(\alpha)}(y^{(\alpha)})
\end{align}
denotes the error on the $\nout$ magic states,
resulting from the logical error of the $\alpha$-th routine,
then after the $\alpha$-th measurement routine the error pattern is
\begin{align}
 y^{(0)}(\alpha) = y^{(0)}(0) + \ell^{(1)} + \cdots + \ell^{(\alpha)} \mod 2
\label{eq:ErrorPropagation}
\end{align}
and the outer syndrome by $\alpha$-th routine is given by
\begin{align}
 &M^{(\alpha)} y^{(0)}(0) + M^{(\alpha)}\ell^{(1)} + \cdots + M^{(\alpha)} \ell^{(\alpha-1)} + ((10) \otimes \vec 1_{\nin^{(\alpha)}}) y^{(\alpha)} \mod 2.
 \label{eq:outerSyndrome}
\end{align}

We have not yet determined the function $L^{(\alpha)}$ 
that maps an error pattern inside the inner code to a logical error.
This function is well-defined only if the encoded state on the inner code
remains in the code space.
The condition for the zero inner syndrome 
is given by commutation relations between an error operator and the stabilizers.
Since the errors are of $Y$ type and the stabilizers are from a weakly self-dual CSS code,
the commutation relation can be read off from
\begin{align}
 S^{(\alpha)} ( y^{(\alpha)}_\text{first half} + y^{(\alpha)}_\text{second half} ) 
 = ((11)\otimes S^{(\alpha)}) y^{(\alpha)} \mod 2
 \label{eq:innerSyndrome}
\end{align}
where each row of $S^{(\alpha)}$ corresponds to a stabilizer generator.
The sum of two halves is
to account for the fact that a pair of errors on a single qubit in the inner code
is equal to no error on that qubit in regards to inner syndrome.
Conditioned on the zero inner syndrome,
the function $L^{(\alpha)}$ can now be determined.
An encoded qubit $k$ is acted on by a $Y$-logical operator
if and only if the logical $X$ and $Z$ operators of the logical qubit $k$
both anticommute with a given Pauli operator.
Since the $X$ and $Z$ logical operators are represented by the same bit string 
$(L_{k,1},L_{k,2},\ldots,L_{k,\nin^{(\alpha)}})$
under the normal basis of logical operators we chose,
the $Y$ logical operator on the logical qubit $k$ is enacted if and only if
\begin{align}
 &(L_{k,1},L_{k,2},\ldots,L_{k,\nin^{(\alpha)}}) \cdot 
 (y^{(\alpha)}_\text{first half} +y^{(\alpha)}_\text{second half}) = 1 \mod 2.
\end{align}
Therefore, the function $L^{(\alpha)}$ is linear over $\FF_2$
whose matrix has rows that correspond to logical operators.
Since a routine $\alpha$ may not act on all the $\nout$ qubits,
the matrix $L^{(\alpha)}$, which is $\nout$-by-$\nin^{(\alpha)}$,
has nonzero rows only for the support of $M^{(\alpha)}$.
In addition, the choice of logical operators according to a normal basis 
ensures that nonzero rows of $L^{(\alpha)}$ are orthonormal.
\begin{align}
 \sum_{j=1}^{\nin^{(\alpha)}} L^{(\alpha)}_{kj} L^{(\alpha)}_{k'j} &= M^{(\alpha)}_k \delta_{kk'} \mod 2. 
\end{align}

We thus have collected all necessary ingredients for a complete check matrix:
A representation of errors \eqref{eq:errorVector},
the outer syndrome \eqref{eq:outerSyndrome}, 
and the inner syndrome \eqref{eq:innerSyndrome}.
\begin{widetext}
\begin{align}
\CF_0 =\left[
\begin{array}{c|c|c|c|c|c}
M^{(1)} & (10)\otimes \vec 1           &                             &                     &          &                          \\
M^{(2)} & M^{(2)}((11)\otimes L^{(1)} )& (10)\otimes \vec 1           &                     &         &                          \\
M^{(3)} & M^{(3)}((11)\otimes L^{(1)} )& M^{(3)}((11)\otimes L^{(2)} )& (10) \otimes \vec 1  &        &                          \\
\vdots  &          \vdots              &          \vdots              &        \vdots        &  \ddots&                           \\
M^{(\ncheck)}&M^{(\ncheck)}((11)\otimes L^{(1)})& M^{(\ncheck)}((11)\otimes L^{(2)})& \cdots & \cdots & (10)\otimes \vec 1        \\
\hline
 & (11)\otimes S^{(1)} &  &  &  &  \\
 &                     &(11) \otimes S^{(2)} &   &  &  \\
 & &  &  \ddots & & \\
 &  & &  & (11) \otimes S^{(\ncheck-1)} & \\
 &  & &  & & (11) \otimes S^{(\ncheck)}
\end{array}
\right]
\end{align}
\end{widetext}
where the blank entries are zero, and the vertical lines are to facilitate reading.
In the outer syndrome block, each displayed row is a single row,
whereas in the inner syndrome block, each displayed entry is a submatrix.
The propagated error from the inner codes to the output magic states
is inscribed in \eqref{eq:ErrorPropagation},
which we can represent as a linear map
\begin{widetext}
\begin{align}
\CF_1 =\left[
\begin{array}{c|c|c|c|c|c}
 I_\nout & (11)\otimes L^{(1)}          & (11) \otimes L^{(2)}         & (11) \otimes L^{(3)} & \cdots & (11) \otimes L^{(\ncheck)}
 \end{array}
\right].
\end{align}
\end{widetext}
The vertical lines are aligned with those of $\CF_0$.

We have arrived at the following theorem.
\begin{theorem}
Suppose the error pattern of all $T_Y$-gates and states is represented by a bit vector $y$.
Then, the magic state distillation protocol based on Hadamard check routines using a normal weakly self-dual code
accepts the output if and only if $\CF_0 y = 0$,
and the accepted output does not contain an error if and only if $\CF_1 y = 0$.
\label{thm:completeCheckMatrix}
\end{theorem}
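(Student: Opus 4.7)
The plan is to observe that Theorem~\ref{thm:completeCheckMatrix} is essentially a bookkeeping consolidation of the equations derived in the preceding text, so the proof should proceed by matching each row block of $\CF_0$ and $\CF_1$ against an equation already on the page. I would split the ``if and only if'' statements into two: acceptance is the conjunction of (i) zero inner syndromes for every inner routine, and (ii) zero outer syndromes for every Hadamard measurement routine; the output-is-clean condition is that the error pattern $y^{(0)}(\ncheck)$ on the output magic states, after all logical-error propagation, vanishes.

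First I would address acceptance. The inner syndrome from routine $\alpha$ is given by equation~\eqref{eq:innerSyndrome} as $((11) \otimes S^{(\alpha)}) y^{(\alpha)}$. Setting all these to zero is precisely the lower block of $\CF_0 y = 0$, because that block is block-diagonal and acts only on $y^{(\alpha)}$ for each $\alpha$. For the outer block, I would first re-express $\ell^{(\beta)} = L^{(\beta)} (y^{(\beta)}_{\text{first half}} + y^{(\beta)}_{\text{second half}})$ as
\begin{equation*}
\ell^{(\beta)} = ((11) \otimes L^{(\beta)}) y^{(\beta)},
\end{equation*}
which is just the statement that appending $L^{(\beta)}$ twice side by side and applying to $y^{(\beta)}$ sums the two halves. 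Substituting into the outer syndrome expression \eqref{eq:outerSyndrome}, the $\alpha$-th outer syndrome becomes
\begin{equation*}
M^{(\alpha)} y^{(0)} + \sum_{\beta=1}^{\alpha-1} M^{(\alpha)}\bigl((11)\otimes L^{(\beta)}\bigr) y^{(\beta)} + \bigl((10)\otimes \vec 1_{\nin^{(\alpha)}}\bigr) y^{(\alpha)},
\end{equation*}
which matches the $\alpha$-th row of the upper block of $\CF_0$ applied to $y = (y^{(0)}, y^{(1)}, \ldots, y^{(\ncheck)})^T$ entry by entry.

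For $\CF_1$, I would apply the same substitution $\ell^{(\beta)} = ((11)\otimes L^{(\beta)}) y^{(\beta)}$ to the propagation formula \eqref{eq:ErrorPropagation} with $\alpha = \ncheck$, giving
\begin{equation*}
y^{(0)}(\ncheck) = y^{(0)}(0) + \sum_{\beta=1}^{\ncheck} ((11)\otimes L^{(\beta)}) y^{(\beta)} = \CF_1 y,
\end{equation*}
where the identity block acts on $y^{(0)}$ and the remaining blocks compute the logical-error contributions. The accepted output is error-free exactly when $y^{(0)}(\ncheck) = 0$, establishing the second claim.

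The argument is really a check that the column blocks of $\CF_0$ and $\CF_1$ align with the sub-vectors $y^{(\alpha)}$ and that the tensor-product shorthand $(11)\otimes \cdot$ and $(10)\otimes \cdot$ correctly encode ``sum both halves'' and ``take only the first half,'' respectively. The only mild obstacle is being careful that the $\ell^{(\beta)}$ entries are supported on $M^{(\beta)}$ and orthonormal there, which is needed so that $L^{(\beta)}$ genuinely describes an induced logical error rather than an arbitrary linear map; this was already ensured by the normal-basis choice, so no further work is needed beyond citing the setup.
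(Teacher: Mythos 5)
The paper proves Theorem~\ref{thm:completeCheckMatrix} implicitly: it derives equations~\eqref{eq:errorVector}, \eqref{eq:outerSyndrome}, \eqref{eq:innerSyndrome}, and \eqref{eq:ErrorPropagation}, assembles $\CF_0$ and $\CF_1$ directly from them, and then announces the theorem with ``We have arrived at the following theorem,'' leaving the block-by-block matching to the reader. Your proposal is exactly that matching carried out explicitly and correctly -- including the key observation $\ell^{(\beta)} = ((11)\otimes L^{(\beta)})y^{(\beta)}$ and the caveat that $L^{(\beta)}$ (hence $\CF_1 y$) is well-defined only conditional on acceptance -- so it is a correct proof along the same lines as the paper.
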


\section{Orthogonal bases at level $\nu$}
\label{sec:orthobases}

In this section, it is implicit that an element of $\FF_2$ is promoted to an integer in $\ZZ$.
The association is such that $\FF_2 \ni 0 \mapsto 0 \in \ZZ$ and $\FF_2 \ni 1 \mapsto 1 \in \ZZ$.
Likewise, any element of $\ZZ/2^\nu \ZZ$ will be represented by an integer from $0$ to $2^\nu-1$ of $\ZZ$.
Unless noted by ``mod $ 2^\nu$,'' all arithmetics are over the usual integers $\ZZ$.
However, every vector space is over $\FF_2$.
We always regard a matrix as a set of rows.
The $r$-th row of a matrix $A$ is denoted as $A_{r*}$.

\begin{definition}
We consider a vector space $\FF_2^n$ equipped with an odd integer vector $t \in (2\ZZ+1)^n$,
called a {\bf coefficient} vector.
Let $\nu \ge 2$.
The {\bf norm at level $\nu$} 
($\nu$-norm) of $v \in \FF_2^n$ is $\| v \|_\nu = \sum_i v_i t_i \mod 2^\nu$.
Two vectors $v$ and $w$ are {\bf orthogonal at level $\nu$} ($\nu$-orthogonal)
if $\sum_i v_i w_i t_i = 0 \mod 2^{\nu -1}$.
Two subspaces $V,W$ are $\nu$-orthogonal 
if every pair $v \in V$ and $w \in W$ is $\nu$-orthogonal.
A set $G = \{ g^{(1)}, \ldots, g^{(k)} \}$ of vectors is
$\nu$-orthogonal
if, for any two disjoint subsets $G'$ and $G''$ of $G$, the subspaces $\mathrm{span}(G')$ and $\mathrm{span}(G'')$ are $\nu$-orthogonal.
A $\nu$-orthogonal set is {\bf $\nu$-orthonormal} if every member has $\nu$-norm 1.
A $\nu$-orthogonal set or subspace is {\bf $\nu$-null} if every member has $\nu$-norm 0.
To emphasize the coefficient vector $t$, 
we will sometimes write $(\nu,t)$-orthogonal, -norm ($\norm{\cdot}_{\nu,t}$), -null, -orthonormal.
\end{definition}

The $\nu$-orthogonality is a generalization of a notion 
that is considered for transversal $T$-gates~\cite{Bombin2015,KubicaBeverland2015,BravyiCross2015};
previously, a coefficient vector had components $\pm 1$.
Being $2$-orthogonal is equivalent to being orthogonal under the usual dot product over $\FF_2$.
Being $2$-null or $2$-orthonormal is nontrivial,
but is easily satisfied as Lemma~\ref{lem:nu2basis} below shows.
As we will see below,
an orthogonal matrix at level 3 is triorthogonal~\cite{BravyiHaah2012Magic} 
since $t_i$ is odd, but we do not know whether every triorthogonal matrix is 
orthogonal at level 3 for some coefficient vector $t$.
To the best of our knowledge,
all examples of triorthogonal matrices in the literature
are actually orthogonal at level 3.
See further discussion in Section~\ref{sec:examples}.

We now give equivalent conditions for the $\nu$-orthogonality,
as an application of a result by Ward~\cite{Ward1990};
see also~\cite{BravyiHaah2012Magic}.
It will be very convenient to introduce the following notation.
It is customary to denote a matrix element as $A_{ai}$ for a matrix $A$.
We  write $A^\star_{\fa i}$ 
for any unordered set of row labels $\fa = \{\fa_1, \ldots, \fa_m\}$
(whose cardinality $|\fa|$ is equal to $m$),
to denote
\begin{align}
 A^\star_{\fa i} = A_{\fa_1,i} A_{\fa_2,i} \cdots A_{\fa_m,i}.
\end{align}
If $|\fa| = 0$, then $A^\star_{\fa i} = 1$ by convention.
By definition, $|\fa|$ cannot be larger than the number of rows of $A$.

\begin{lemma}
Let $t$ be a coefficient vector of length $n$.
Let $A$ be a binary matrix with $n$ columns where the rows are $\FF_2$-linearly independent.
The following are equivalent:
\begin{enumerate}
	\item[(i)] The rows of $A$ form a $(\nu,t)$-orthogonal set.
	\item[(ii)] For every subset $K$ of rows of $A$, 
	$\norm{ \sum_{r \in K} A_{r*} \mod 2 }_{\nu,t} =  \sum_{r \in K} \norm{A_{r*} }_{\nu,t} \mod 2^\nu$.
	\item[(iii)]$ 2^{|\fa| -1} \sum_i A^\star_{\fa i} t_i = 0 \mod 2^\nu$ for every $\fa$ such that $2 \le |\fa| \le \nu$.
\end{enumerate}
In particular, the following are equivalent:
\begin{enumerate}
	\item[(1)] Every vector in a subspace $\calS \subseteq \FF_2^n$ has zero $(\nu,t)$-norm.
	\item[(2)] There exists a spanning set for $\calS$ that is $(\nu,t)$-null.
	\item[(3)] Every spanning set for $\calS$ is $(\nu,t)$-null.
\end{enumerate}
\label{lem:nu-orthogonality}
\end{lemma}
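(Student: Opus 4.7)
The plan is to establish (i)$\Leftrightarrow$(ii)$\Leftrightarrow$(iii) directly, in the spirit of Ward's divisibility analysis, using two integer identities for binary entries: the pairwise rule $a \oplus b = a + b - 2ab$ for $a,b \in \{0,1\}$ (with $\oplus$ denoting XOR) and its iterated form
\begin{align}
\bigoplus_{r \in K} b_r = \sum_{\emptyset \neq S \subseteq K} (-2)^{|S|-1} \prod_{r \in S} b_r \quad \text{in } \ZZ,
\end{align}
which follows from the pairwise rule by induction on $|K|$. Since these are equalities in $\ZZ$, multiplying through by $t_i$ and summing over columns translates $\nu$-norm statements into congruences modulo $2^\nu$.

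For (i)$\Leftrightarrow$(ii) I would induct on $|K|$. Writing $K = K' \sqcup \{r_0\}$, $v = \sum_{r \in K'} A_{r*} \bmod 2$, and $w = A_{r_0 *}$, the pairwise identity applied column-wise gives
\begin{align}
\norm{v + w \bmod 2}_{\nu,t} = \norm{v}_{\nu,t} + \norm{w}_{\nu,t} - 2\sum_i t_i v_i w_i
\end{align}
as an equality in $\ZZ$. Under (i), $v$ and $w$ belong to the spans of the disjoint subsets $K'$ and $\{r_0\}$, so the cross term lies in $2^\nu\ZZ$; combined with the inductive hypothesis this yields (ii). Conversely, applying (ii) to $K_1'$, $K_2'$, and $K_1' \sqcup K_2'$ for any disjoint $K_1', K_2'$ and subtracting isolates $-2\sum_i t_i v_i w_i \equiv 0 \pmod{2^\nu}$, which is exactly the $\nu$-orthogonality of the spans required by (i).

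For (ii)$\Leftrightarrow$(iii) I would substitute the iterated XOR identity to get
\begin{align}
\norm{\sum_{r \in K} A_{r*} \bmod 2}_{\nu,t} = \sum_{\emptyset \neq S \subseteq K} (-2)^{|S|-1} \sum_i t_i A^\star_{S,i}.
\end{align}
The $|S|=1$ terms recover $\sum_{r \in K} \norm{A_{r*}}_{\nu,t}$, and any $|S| > \nu$ term vanishes mod $2^\nu$ from the prefactor alone. So (ii) is equivalent to the vanishing mod $2^\nu$, for every $K$, of the restricted sum over $2 \le |S| \le \nu$. Specializing to $K = \fa$ and inducting on $m = |\fa|$ from $m=2$ upward, the smaller-$|S|$ contributions are killed by the inductive hypothesis and only the $|S|=m$ term remains, giving $2^{m-1} \sum_i t_i A^\star_{\fa,i} \equiv 0 \pmod{2^\nu}$, i.e.\ (iii) at size $m$. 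The converse is immediate term by term.

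The final statement follows from (i)$\Rightarrow$(ii): a $(\nu,t)$-null spanning set of $\calS$ satisfies (i), hence (ii), so every $\FF_2$-combination of the spanning vectors has norm equal to the sum of individual (zero) norms mod $2^\nu$; the converse is trivial since the spanning set is itself contained in $\calS$. I expect the most delicate step to be the peeling induction in (ii)$\Rightarrow$(iii): one must verify that after the smaller-$|S|$ contributions are eliminated using the inductive hypothesis, exactly one term of size $m$ survives, so that the congruence on $\sum_i t_i A^\star_{\fa,i}$ becomes sharp rather than merely implied up to a lower power of $2$. With that ordering in place, the rest reduces to direct manipulation of the two integer identities above.
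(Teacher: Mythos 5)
Your proposal is correct and follows essentially the same strategy as the paper's proof: both hinge on the integer expansion of an $\FF_2$-sum, $\bigoplus_{r\in K} b_r = \sum_{\emptyset\neq S\subseteq K}(-2)^{|S|-1}\prod_{r\in S} b_r$, applied coordinatewise with weights $t_i$ and reduced modulo $2^\nu$, together with the peeling induction on $|K|$. The only cosmetic difference is the routing — you establish (i)$\Leftrightarrow$(ii) and (ii)$\Leftrightarrow$(iii) as biconditionals, while the paper cycles (i)$\Rightarrow$(ii)$\Rightarrow$(iii)$\Rightarrow$(i) and closes the loop by a double expansion of $v_iw_i$ over disjoint index sets, but the substance is the same.
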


As an example, if the rows of a binary matrix $A$ are $\nu$-null with respect to $t = \vec 1$,
then any vector in the row $\FF_2$-span of $A$ has weight divisible by $2^\nu$.
More concretely, the rows of
\begin{align}
\begin{bmatrix}
	1 & 0 & 1 & 0 & 1 & 0 & 1\\
	0 & 1 & 1 & 0 & 0 & 1 & 1\\
	0 & 0 & 0 & 1 & 1 & 1 & 1
\end{bmatrix}
\end{align}
are $2$-null with respect to $t=\vec 1$ and span a doubly even subspace of $\FF_2^7$.

\begin{proof}
If $y \in \FF_2^n$ is a binary vector of weight $|y|$, 
then the parity $\epsilon(y) = 0,1$ of its weight can be expressed as
\begin{align}
\epsilon(y) = \frac12 \left(1 - (1 - 2)^{|y|} \right) = \sum_{p=1}^{|y|} \binom{|y|}{p} (-2)^{p-1}.
\end{align}
Here, the first equality is obvious; consider the cases where $|y|$ is even or odd.
The second equality is by the binomial theorem $(1-2)^m= \sum_{p=0}^m \binom{m}{p} 1^{m-p} (-2)^{p}$.
We claim that
\begin{align}
\epsilon(y) = \sum_{\fa \neq \emptyset} (-2)^{|\fa|-1} y^{\star}_\fa
\label{eq:parityMod2nu}
\end{align}
where we have treated the vector $y$ as a column matrix with $n$ rows to use the notation $y^\star_\fa$,
and $\fa$ ranges over all nonempty subsets of $\{1,2,\ldots, n\}$.
Eq.~\eqref{eq:parityMod2nu} can be understood as follows.
We may express the summation $\sum_{\fa \neq \emptyset}$ as a double summation $\sum_{p=1}^n \sum_{\fa: |\fa|=p}$.
Consider $\sum_{\fa: |\fa|=p} y^\star_\fa$.
The binary summand $y^\star_\fa$ is 1 if and only if $\fa$ labels nonzero entries of $y$.
There are $|y|$ nonzero entries of $y$, and there are $\binom{|y|}{p}$ different subsets of $p$ nonzero entries.
Hence, for $p > 0$ we see $\sum_{\fa: |\fa|=p} y^\star_\fa = \binom{|y|}{p}$,
and Eq.~\eqref{eq:parityMod2nu} is proved.

Since Eq.~\eqref{eq:parityMod2nu} is true as an integer equation,
we can take it modulo $2^\nu$.
Hence, for any binary matrix $B$,
if a vector $g$ is the sum of all rows of $B$ over $\FF_2$,
then
\begin{align}
 g_i = \sum_{\fa \neq \emptyset} (-2)^{|\fa|-1} B^{\star}_{\fa i} \mod 2^\nu.
 \label{eq:gExpansion}
\end{align}
We will use Eq.~\eqref{eq:gExpansion} with various $B$.

\emph{(i)} $\Rightarrow$ \emph{(ii)}:
If $|K| = 0$ or $1$, there is nothing to prove.
We will use induction in $|K|$.
Suppose \emph{(ii)} is true for any $K$ with $|K| < k$,
and we wish to prove \emph{(ii)} for $K = K' \sqcup \{ a \}$ with $|K'| = k-1$.
Put $g = \sum_{r \in K'} A_{r*} \mod 2$, that is a binary vector.
By the induction hypothesis, $\norm{g}_\nu = \sum_{r \in K'} \norm{A_{r*}}_\nu$.
Then, by \eqref{eq:gExpansion} $\mod 2$,
\begin{align*}
\norm{ \sum_{r \in K} A_{r*} \mod 2}_\nu 
&=
\norm{ A_{a*} +  g \mod 2}_\nu \\
&= 
\norm{ A_{a*} + g - 2 A_{a*} g }_\nu \qquad \text{(where the product of vectors is component-wise)}\\
&= 
\norm{A_{a*}}_\nu + \norm{g}_\nu - 2 \sum_i A_{ai}g_i t_i \mod 2^\nu\\
&=
\norm{A_{a*}}_\nu + \sum_{r \in K'} \norm{A_{r*}}_\nu \mod 2^\nu
\end{align*}
where in the last equality we used the assumption \emph{(i)} to eliminate the third term.
This completes the induction.

\emph{(ii)} $\Rightarrow$ \emph{(iii)}:
By \eqref{eq:gExpansion}, we have
\begin{align*}
0
=\norm{\sum_{r \in \fa} A_{r*} \mod 2}_\nu - \sum_{r \in \fa} \norm{ A_{r*}}_\nu 
= \sum_{\fa' \subseteq \fa, ~|\fa'|\ge 2} \sum_i (-2)^{|\fa'|-1} A^\star_{\fa' i} t_i \mod 2^\nu.
\end{align*}
If $|\fa| = 2$, then it must be that $\fa' = \fa$,
and \emph{(iii)} is proved in this case.
For $|\fa| > 2$ we use induction.
The summand of $\sum_{\fa' \subseteq \fa,~|\fa'| \ge 2}$ with $|\fa'| < |\fa|$ vanishes by the induction hypothesis,
and we are left with the summand with $\fa' = \fa$ which must vanish.
This completes the induction.

\emph{(iii)} $\Rightarrow$ \emph{(i)}:
Let $K$ and $K'$ be two disjoint sets of rows of $A$.
We have to show that $v = \sum_{r \in K} A_{r*} \mod 2$ 
and $w = \sum_{r \in K'} A_{r*} \mod 2$ are $\nu$-orthogonal.
Using \eqref{eq:gExpansion},
\begin{align*}
(-2) \sum_i v_i w_i t_i 
&= 
\sum_{\emptyset \neq \fa \subseteq K} \sum_{\emptyset \neq \fb \subseteq K'} 
(-2)^{|\fa|+|\fb|-1} \sum_i A^\star_{\fa i} A^\star_{\fb i} t_i &\mod 2^\nu \nonumber \\
&=
\sum_{\emptyset \neq \fc \subseteq K \cup K'} (-2)^{|\fc|-1} \sum_i A^\star_{\fc i} t_i &\mod 2^\nu \nonumber \\
&=
0 &\mod 2^\nu
\end{align*}
where the second equality is because $\fa$ and $\fb$ are disjoint,
and the third is because $|\fc| \ge 2$.

These complete the proof of the first part of the lemma.

The second part of the lemma uses \emph{(i) $\leftrightarrow$ (ii)}.
\emph{(3)} $\Rightarrow$ \emph{(2)}: Trivial.
\emph{(2)} $\Rightarrow$ \emph{(1)}: A null spanning set is orthogonal,
so by \emph{(i) $\rightarrow$ (ii)} we see every vector in the span has norm zero.
\emph{(1)} $\Rightarrow$ \emph{(3)}: 
Choose any spanning set $A$ for $\calS$. 
We only have to show that $A$ is orthogonal,
but this follows from \emph{(ii) $\rightarrow$ (i)}.
\end{proof}

To show that 2-orthonormality 
is in fact not a stronger notion than the usual orthonormality over $\FF_2$,
we will need an extension lemma:
\begin{lemma}
	Let $A$ be a binary matrix with linearly independent rows over $\FF_2$.
	Suppose an inhomogeneous linear equation $Ax = b$ has a solution $x = u^{\braket{\nu-1}}$ 
	mod $2^{\nu-1}$ where $\nu \ge 2$.
	(Here, the superscript $\braket{\nu-1}$ denotes the number $2^{\nu-1}$ modulo which the entries are defined.)
	Then, the same equation has a solution $x = u^{\braket{\nu}}$ mod $2^{\nu}$
	such that $u^{\braket{\nu}} = u^{\braket{\nu-1}} \mod 2^{\nu-1}$.
\label{lem:extension}
\end{lemma}
\begin{proof}
	The proof is by induction in the number of rows of $A$.
	When there is only one row in $A$, the scalar (i.e., a one-dimensional column vector)
	$A u^{\braket {\nu-1}} - b \mod 2^\nu$ is either 0 or $2^{\nu-1}$.
	We subtract this even number from the component of 
	$u^{\braket{\nu-1}}$ where $A$ has the first nonzero entry.
	The matrix $A$ must have such nonzero entry because the row is linearly independent over $\FF_2$.
	This proves the claim in the base case.
	
	To prove the induction step when $A$ has more than one row,
	consider Gauss elimination over $\ZZ/{2^{\nu}}\ZZ$ 
	to transform $A$ into $(1) \oplus A'$ up to permutations of columns.
	Such Gauss elimination is possible
	because the rows of $A$ are $\FF_2$-linearly independent, 
	and any odd number is invertible in $\ZZ/{2^{\nu}}\ZZ$.
	One finds a solution $u^{\braket{\nu}}|_{2,\ldots}$ for the $A'$ part 
	over $\ZZ/{2^{\nu}}\ZZ$ by the induction hypothesis,
	and then fixes the first component, if necessary, as in the base case.
\end{proof}

\begin{lemma}
Let $S$ and $L$ be rectangular matrices of $\FF_2$-linearly independent rows 
such that $SS^T = 0 \mod 2$, $S L^T = 0 \mod 2$, and $LL^T = I \mod 2$.
Then, there exists an odd integer coefficient vector $t$,
with respect to which $G = \begin{bmatrix} L \\ S \end{bmatrix}$ is $2$-orthogonal,
$L$ is $2$-orthonormal, and $S$ is $2$-null.
\label{lem:nu2basis}
\end{lemma}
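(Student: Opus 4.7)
The plan is to write each $t_i = 1 + 2 x_i$ with $x_i \in \FF_2$ and solve for the bits $x_i$ as a linear system. Since every such $t_i$ is automatically odd, Lemma~\ref{lem:nu-orthogonality}(iii) with $|\fa|=2$ reduces the required $2$-orthogonality of $G$'s rows to the single condition $\sum_i g_i g'_i \equiv 0 \pmod 2$ for all distinct rows $g, g'$ of $G$; this is exactly the off-diagonal content of the three hypotheses $SS^T = 0$, $SL^T = 0$, $LL^T = I \pmod 2$. Thus $G$ is $2$-orthogonal no matter which $x_i$ are chosen, and the $x_i$ are entirely devoted to tuning the $2$-norms.

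Expanding $\|v\|_2 = \sum_i v_i t_i \equiv |v| + 2\sum_i v_i x_i \pmod 4$, I would read the condition $\|\ell\|_2 = 1$ for each row $\ell$ of $L$ as $\sum_i \ell_i x_i \equiv (1 - |\ell|)/2 \pmod 2$, which is a genuine $\FF_2$ equation because the diagonal of $LL^T = I$ gives $|\ell| \equiv \sum_i \ell_i^2 \equiv 1 \pmod 2$. Likewise $\|s\|_2 = 0$ for each row $s$ of $S$ becomes $\sum_i s_i x_i \equiv |s|/2 \pmod 2$, which is well-defined because the diagonal of $SS^T = 0$ gives $|s| \equiv 0 \pmod 2$. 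Stacking these equations yields a single affine linear system $G x = b$ over $\FF_2$, with $b$ determined by the row weights of $L$ and $S$.

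To finish I need this system to be solvable, for which it suffices that the rows of $G = \begin{bmatrix} L \\ S \end{bmatrix}$ be $\FF_2$-linearly independent. If $\sum_k \beta_k \ell^{(k)} + \sum_j \alpha_j s^{(j)} = 0$, then dotting against $\ell^{(k')}$ kills the $S$ part via $SL^T = 0$ and isolates $\beta_{k'}$ via $LL^T = I$, forcing $\beta_{k'} = 0$; the hypothesized independence of the rows of $S$ then forces every $\alpha_j = 0$. So $G$ has full row rank, the map $x \mapsto Gx$ is surjective on $\FF_2^{\text{(rows)}}$, the system admits a solution $x \in \FF_2^n$, and $t_i = 1 + 2 x_i$ is the desired coefficient vector. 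The only way the argument could collapse would be if the target entries $(1-|\ell|)/2$ and $|s|/2$ were ill-defined as elements of $\FF_2$, but the diagonals of the three hypotheses are precisely the arithmetic ensuring they are integers modulo $2$, so no serious obstacle remains.
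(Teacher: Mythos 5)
Your argument is correct, and the underlying idea matches the paper's: the $2$-orthogonality of $G$ is automatic for any odd $t$, so the only work is to satisfy the norm conditions $\sum_i L_{bi} t_i \equiv 1$ and $\sum_i S_{ai} t_i \equiv 0 \pmod 4$, which become an $\FF_2$-linear system once you start from $t \equiv \vec 1 \pmod 2$ and solve for the correction bits. Where you diverge is in execution: the paper outsources the solvability to Lemma~\ref{lem:extension} (lifting a mod-$2$ solution of $Gt = v$ to a mod-$4$ solution via Gaussian elimination over $\ZZ/4\ZZ$), whereas you inline that step by writing $t_i = 1 + 2x_i$ and solving $Gx = b$ over $\FF_2$ directly, with $b$ built from the row weights. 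Your route is a bit more self-contained and has one genuine advantage worth noting: the paper's appeal to Lemma~\ref{lem:extension} silently requires that the \emph{stacked} matrix $G = \begin{bmatrix} L \\ S \end{bmatrix}$ have $\FF_2$-linearly independent rows, which the hypothesis only asserts for $L$ and $S$ separately. You supply the missing verification (dot a putative relation against each $\ell^{(k')}$, using $LL^T = I$ and $SL^T = 0$ to kill the $L$ coefficients, then fall back on the independence of the rows of $S$), so your proof is actually slightly more complete than the one in the paper.
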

\begin{proof}
The notion of $2$-orthogonality is defined over $\FF_2$, and therefore it is immediate that 
$G$, as a set of its rows, is $2$-orthogonal.
It remains to prove that there exists a coefficient vector $t$ such that
$\sum_i S_{ai} t_i = 0 \mod 4$ and $\sum_i L_{bi} t_i = 1 \mod 4$ for any $a,b$
(i.e., we need to check the $2$-norms). 
An odd integer solution $t$ to these equations are found by Lemma~\ref{lem:extension},
since these equations have solution $t = \vec 1 \mod 2$.
\end{proof}

\section{Level lifting}
\label{sec:levellifting}

\begin{theorem}\label{thm:levelLifting}
Let $G^{(\alpha)} = \begin{bmatrix} L^{(\alpha)} \\ S^{(\alpha)} \end{bmatrix}$ 
be a $(\nu,t^{(\alpha)})$-orthogonal binary matrix of $\nin^{(\alpha)}$ columns
for $\alpha = 1,2, \ldots, \ncheck$,
where $S^{(\alpha)}$ is $(\nu,t^{(\alpha)})$-null and consists of $\FF_2$-linearly independent rows,
$L^{(\alpha)}$ has $\nout$ rows,
and the set of all nonzero rows of $L^{(\alpha)}$ is $(\nu,t^{(\alpha)})$-orthonormal.
Assume that
\begin{align}
 \sum_k \sum_{i=1}^{\nin^{(\alpha)}} L^{(\alpha)}_{ki} t_i^{(\alpha)} &= \sum_i t_i^{(\alpha)} \mod 2^\nu. \label{eq:Ltt}
\end{align}

Let $M_{\alpha*}$ be a row binary vector defined as $M_{\alpha k} = 1$ if and only if $L^{(\alpha)}_{k*}$ is nonzero.
Then, the following matrix $\CF$ is $(\nu+1)$-orthogonal with respect to some coefficient vector $t$.
$\CF_\ell$ is $(\nu+1,t)$-orthonormal, and $\CF_\mathrm{out}$ and $\CF_\mathrm{in}$ are $(\nu+1,t)$-null.
Furthermore, $\sum_{r,i} (\CF_\ell)_{ri} t_i = \sum_i t_i$.
\begin{widetext}
\begin{align}
\begin{bmatrix}
\CF_\ell\\
\hline
\CF_\mathrm{out}\\
\hline
\CF_\mathrm{in}
\end{bmatrix}
= \left[
\begin{array}{c|c|c|c|c|c}
 I_\nout & (11)\otimes L^{(1)}          & (11) \otimes L^{(2)}         & (11) \otimes L^{(3)} & \cdots & (11) \otimes L^{(\ncheck)}\\
 \hline
 M_{1*} & (10)\otimes \vec 1           &                             &                     &          &                          \\
M_{2*} & M_{2*}((11)\otimes L^{(1)} )& (10)\otimes \vec 1           &                     &         &                          \\
M_{3*} & M_{3*}((11)\otimes L^{(1)} )& M_{3*}((11)\otimes L^{(2)} )& (10) \otimes \vec 1  &        &                          \\
\vdots  &          \vdots              &          \vdots              &        \vdots        &  \ddots&                           \\
M_{\ncheck *}&M_{\ncheck *}((11)\otimes L^{(1)})& M_{\ncheck *}((11)\otimes L^{(2)})& \cdots & \cdots & (10)\otimes \vec 1        \\
\hline
 & (11)\otimes S^{(1)} &  &  &  &  \\
 &                     &(11) \otimes S^{(2)} &   &  &  \\
 & &  &  \ddots & & \\
 &  & &  & (11) \otimes S^{(\ncheck-1)} & \\
 &  & &  & & (11) \otimes S^{(\ncheck)}
\end{array}
\right]
\end{align}
\end{widetext}
\end{theorem}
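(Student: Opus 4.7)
The plan is to verify $(\nu+1,t)$-orthogonality of $\CF$ via criterion (iii) of Lemma~\ref{lem:nu-orthogonality}, for a coefficient vector $t$ I will construct. My ansatz is $t = (\tau, t^{(1)}, t^{(1)}, t^{(2)}, t^{(2)}, \ldots, t^{(\ncheck)}, t^{(\ncheck)})$, duplicating each $t^{(\alpha)}$ across the two halves of its inner block and leaving an odd-integer vector $\tau \in \ZZ^{\nout}$ to be pinned down by a system of linear congruences modulo $2^{\nu+1}$. Lemma~\ref{lem:extension} reduces simultaneous solvability of this system to its mod-$2$ counterpart, and the latter is witnessed by $\tau \equiv \vec 1 \pmod 2$ together with the congruence $\nin^{(\alpha)} + \kin^{(\alpha)} \equiv 0 \pmod 2$ that every weakly self-dual CSS code satisfies.

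I would next check the $|\fa|=1$ norm and nullity conditions block by block. For $\CF_{\ell,r*}$ the norm expands as $\tau_r + 2\sum_\alpha (L^{(\alpha)} t^{(\alpha)})_r$, which the definition of $\tau_r$ pins to $1 \pmod{2^{\nu+1}}$; summing over $r$ and invoking \eqref{eq:Ltt} yields the stated identity $\sum_{r,i}(\CF_\ell)_{ri} t_i = \sum_i t_i$. For $\CF_{\mathrm{in},\alpha,s*}$ the two halves of the inner-$\alpha$ block together give $2\sum_j S^{(\alpha)}_{sj} t^{(\alpha)}_j$, null modulo $2^{\nu+1}$ by $(\nu,t^{(\alpha)})$-nullity of $S^{(\alpha)}$. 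The $\CF_{\mathrm{out},\alpha*}$-nullity is the most delicate: after substituting $\tau$ one reduces to showing $|M_{\alpha*}| - \sum_j t^{(\alpha)}_j - 2\sum_{\beta>\alpha} M_{\alpha*} L^{(\beta)} t^{(\beta)} \equiv 0 \pmod{2^{\nu+1}}$, which closes via \eqref{eq:Ltt}, $(\nu,t^{(\alpha)})$-orthonormality of the nonzero rows of $L^{(\alpha)}$, and a careful mod-$2^\nu$ analysis of the $\beta>\alpha$ cross-terms.

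For off-diagonal orthogonality I would apply criterion (iii) of Lemma~\ref{lem:nu-orthogonality} to every row subset $\fa$ with $2 \le |\fa| \le \nu+1$ and decompose the column sum $\sum_i \CF^\star_{\fa i} t_i$ block by block. The structural observation driving the proof is that in inner-$\alpha$ columns only $\CF_{\mathrm{out},\alpha*}$ breaks the first-half/second-half symmetry imposed by the $(11)\otimes$ pattern. When $\fa$ avoids $\CF_{\mathrm{out},\alpha*}$, the inner-$\alpha$ contribution carries an extra factor of $2$ from duplication, and the resulting sum expands into products of entries of $L^{(\alpha)}$ and $S^{(\alpha)}$ (with the rows of $\CF_{\mathrm{out},\beta}$ for $\beta>\alpha$ contributing $\FF_2$-linear combinations of rows of $L^{(\alpha)}$); applying $(\nu,t^{(\alpha)})$-orthogonality of $G^{(\alpha)}$ via Lemma~\ref{lem:nu-orthogonality}(iii) at level $\nu$ then kills $2^{|\fa|-1}$ times this sum modulo $2^{\nu+1}$. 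When $\fa$ contains $\CF_{\mathrm{out},\alpha*}$, the second-half contribution vanishes and the first-half contribution absorbs an extra $\vec 1$ factor, effectively dropping that row from the product and reducing to $(\nu,t^{(\alpha)})$-orthogonality applied to a subset of size $|\fa|-1$.

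The hardest case is when $\fa$ contains several outer-syndrome rows at distinct indices $\alpha_1 < \cdots < \alpha_m$: the asymmetric contributions from each inner-$\alpha_j$ block must cancel precisely against the output block's $\tau$-driven contribution. My plan for this is an induction on $m$, processing outer rows in increasing $\alpha$-order and leveraging the lower-triangular block structure of $\CF_{\mathrm{out}}$, so that $\CF_{\mathrm{out},\alpha*}$ vanishes on inner-$\beta$ blocks for $\beta > \alpha$ and each step can be reduced to a smaller $\fa$. The decisive cancellation between these asymmetric pieces and the output-block contribution is ultimately forced by \eqref{eq:Ltt}, whose role is precisely to encode the preservation of total $t^{(\alpha)}$-weight under the logical projection $L^{(\alpha)}$.
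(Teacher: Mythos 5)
Your coefficient-vector ansatz is missing the one idea that makes this proof go: the sign flip between the two halves of each inner block. You write $t^{(\alpha)}$ duplicated symmetrically, whereas the paper takes $t=\bigl[\vec 1_{\nout}\,\big|\,(-1~1)\otimes t^{(1)}\,\big|\cdots\big|\,(-1~1)\otimes t^{(\ncheck)}\bigr]$, i.e.\ $-t^{(\alpha)}$ on the pre-$^CZ$ half and $+t^{(\alpha)}$ on the post-$^CZ$ half. That antisymmetry is load-bearing: whenever every chosen row carries the $(11)$ tensor pattern in some inner block, the contributions from the two halves cancel exactly, so every $M_{\beta*}((11)\otimes L^{(\gamma)})$ sub-block and every $(11)\otimes L^{(\gamma)}$ sub-block of $\CF_\ell$ simply drops out of the column sums. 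The only survivor is the single inner block $\alpha=\min\fb$ where $(10)\otimes\vec 1$ breaks the symmetry, plus the output block, and the whole case analysis collapses. With your even-symmetric choice, those same $(11)$-symmetric contributions come with a factor of $2$ instead of cancelling, and the resulting cross-terms overdetermine $\tau$.

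The failure is concrete, not a matter of unwritten cleverness. Take $\ncheck=2$, $\nout=1$, $M_{1,1}=M_{2,1}=1$. Your $\CF_\ell$-norm condition pins $\tau = 1 - 2\bigl(\sum_i L^{(1)}_{1i}t^{(1)}_i + \sum_i L^{(2)}_{1i}t^{(2)}_i\bigr) \equiv -3 \pmod{2^{\nu+1}}$, using $\sum_i L^{(\alpha)}_{1i}t^{(\alpha)}_i \equiv 1 \pmod{2^{\nu}}$ from orthonormality. Meanwhile the $|\fa|=2$ condition of Lemma~\ref{lem:nu-orthogonality}(iii) at level $\nu+1$ applied to the $\CF_\ell$ row and $\CF_{\mathrm{out},1*}$ requires $\tau + \sum_i L^{(1)}_{1i}t^{(1)}_i \equiv 0\pmod{2^{\nu}}$, i.e.\ $\tau\equiv -1\pmod{2^{\nu}}$. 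But $-3\not\equiv-1\pmod{2^{\nu}}$ once $\nu\ge 2$, so the linear system for $\tau$ is already inconsistent over $\ZZ/2^{\nu}\ZZ$ — and Lemma~\ref{lem:extension} cannot rescue a system that conflicts below the top modulus; it only lifts a solution that already exists one level down. (A secondary slip: your $\CF_{\mathrm{out},\alpha*}$-nullity reduction carries cross-terms over $\beta>\alpha$, but row $\alpha$ of $\CF_\mathrm{out}$ is identically zero in inner blocks $\gamma>\alpha$; the cross-terms live at $\gamma<\alpha$, and with the paper's $t$ they all cancel, leaving exactly the $|M_{\alpha*}| - \sum_i t^{(\alpha)}_i$ balance that \eqref{eq:Ltt} settles modulo $2^{\nu}$, after which Lemma~\ref{lem:extension} legitimately bumps the $(\nu,t)$-nullity of $\CF_\mathrm{out}$ up to $(\nu+1,t)$-nullity.)
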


\begin{proof}
We will show that $t$ defined as (written as a row vector modulo $2^\nu$)
\begin{widetext}
\begin{align}
t=
\left[
\begin{array}{c|c|c|c|c|c}
	\vec 1_{\nout} & (-1~1)\otimes t^{(1)} & (-1~1)\otimes t^{(2)} & (-1~1)\otimes t^{(3)} & \cdots & (-1~1)\otimes t^{(\ncheck)}
\end{array}
\right]\label{eq:wt-struct}
\end{align}
\end{widetext}
satisfies the claim.
It is clear that $\sum_{r,i} (\CF_\ell)_{ri} t_i = \sum_i t_i$, even without modulo reduction.
It is also clear from the choice of $t$ that the $(\nu+1,t)$-norm of any row in $\CF_\ell$ is 1,
and that of any row in $\CF_\mathrm{in}$ is 0;
this does not depend on $t^{(\alpha)}$.
For the other conditions, we note that
\begin{align}
 \sum_{i=1}^{\nin^{(\alpha)}} L^{(\alpha)}_{ki} L^{(\alpha)}_{k'i} t^{(\alpha)}_i &= M_{\alpha k} \delta_{k k'} \mod 2^\nu. \label{eq:LLM}
\end{align}
If we set $k=k'$ in \eqref{eq:LLM}, sum over $k$, and use \eqref{eq:Ltt},
we see $\sum_i t_i^{(\alpha)}  = \sum_k M_{\alpha k} \mod 2^\nu$.
This implies that any row in $\CF_\mathrm{out}$ has $(\nu,t)$-norm zero.
To make the $(\nu+1)$-norm zero, we apply Lemma~\ref{lem:extension} to $\CF_\mathrm{out}$
since $\CF_\mathrm{out}$ is in a row echelon form (when read from the bottom right)
which ensures $\FF_2$-linear independence.
This may add $2^\nu$ to some components of $t$.
We are left with the $(\nu+1,t)$-orthogonality,
which is not affected by the modification to $t$ by Lemma~\ref{lem:extension}
since we will only need to evaluate sums modulo $2^\nu$.

We have to show that given $m$ rows with $2 \le m \le \nu+1$, 
the weighted sum of their overlap is zero modulo $2^{\nu+2-m}$.
Note that any part of the overlap that contains the tensor factor $(11)$
has no contribution to the weighted sum due to \eqref{eq:wt-struct}.
Let $\fa$ be a label set of chosen rows of $\CF_\ell$,
$\fb$ be that of $\CF_\mathrm{out}$,
and $\fc$ be that of $\CF_\mathrm{in}$.

{\it $\bullet$ $|\fa| \ge 2$ or $|\fc| \ge 1$}:

If $|\fb| = 0$, there is always a tensor factor $(11)$.
So, assume $|\fb| \ge 1$.
Except for the part with the tensor factor $(11)$,
we must show
\begin{align}
 2^{|\fa|+|\fb|+|\fc|-1} \sum_i L^{\star}_{\fa i} N^{\star}_{\fb' i} S^{\star}_{\fc i} t_i^{(\alpha)} = 0 \mod 2^{\nu+1}
\end{align}
where $L = L^{(\alpha)}$, $S=S^{(\alpha)}$, $N_{bi} = \sum_j M_{b j} L^{(\alpha)}_{ji} \mod 2$ 
for some $\alpha$ such that $\fb = \{\alpha\} \sqcup \fb'$ so $|\fb|=1+|\fb'|$.
Throughout the proof the range of the index $i$ is implicit,
but should be clear by the context.

Expanding $N^\star_{\fb' i}$ using \eqref{eq:gExpansion},
the left-hand side becomes a $\pm$-sum of terms
\begin{align}
 2^{|\fa|+|\fb'|+|\fc| + \sum_{j=1}^{|\fb'|}( |\mathfrak d^{(j)}|-1)}
 \sum_i
 L^{\star}_{\fa i} L^{\star}_{\mathfrak d^{(1)} i} \cdots L^{\star}_{\mathfrak d^{(|\fb'|)} i} S^{\star}_{\fc i} t_i^{(\alpha)} .
 \label{eq:lhs}
\end{align}
From the assumption of $\nu$-orthogonality of $G^{(\alpha)}$ (that includes $\nu$-nullity of $S^{(\alpha)}$),
we have
\begin{align}
&\text{ if } |\fa \cup_{j=1}^{|\fb'|} \mathfrak d^{(j)}| + |\fc| \ge 2 \text{ or } \fa=\fb'=\emptyset \nonumber \\
&\Longrightarrow \nonumber \\
&2^{|\fa \cup_j \mathfrak d^{(j)}| + |\fc|} 
 \sum_i 
 L^{\star}_{\fa i} L^{\star}_{\mathfrak d^{(1)} i} \cdots L^{\star}_{\mathfrak d^{(|\fb'|)} i} S^{\star}_{\fc i} t_i^{(\alpha)}
 = 0 \mod 2^{\nu+1} .
 \label{eq:from-orth}
\end{align}
The condition $|\fa \cup_j \mathfrak d^{(j)}| + |\fc| \ge 2$ or $\fa = \fb' = \emptyset$ is always true
in the present case:
It is clear if $|\fa| \ge 2$;
If $|\fa|=1$, then $|\fc| \ge 1$ so $|\fa| + |\fc| \ge 2$;
If $\fa = \emptyset$ but $|\fb| \ge 1$, then $|\fc| \ge 1$ and $|\mathfrak d^{(1)}| \ge 1$.
Comparing the exponents of 2 in \eqref{eq:lhs} and \eqref{eq:from-orth},
we see that \eqref{eq:lhs} is zero modulo $2^{\nu+1}$.

{\it $\bullet$ $\fa = \{a\}$ and $\fc = \emptyset$}:

Since we are choosing at least two rows, $|\fb| \ge 1$.
Let $\fb_1$ be the smallest row index of $\fb$.
Dropping the part with the tensor factor $(11)$,
we must show
\begin{align}
 2^{|\fb|} M^\star_{\fb a} - 2^{|\fb|} \sum_i L_{a i} N^{\star}_{\fb' i} t_i^{(\fb_1)} = 0 \mod 2^{\nu+1}
 \label{eq:target}
\end{align}
where $L = L^{(\fb_1)}$, $N_{bi} = \sum_j M_{b j} L^{(\fb_1)}_{ji} \mod 2$, and $\fb = \{\fb_1\} \sqcup \fb'$.
Expanding $N^\star_{\fb' i}$ using \eqref{eq:gExpansion},
the second term becomes a $\pm$-sum of terms
\begin{align}
 2^{|\fb'| + 1 + \sum_{j=1}^{|\fb'|}( |\mathfrak d^{(j)}|-1)}
 \sum_i
 L_{a i} L^{\star}_{\mathfrak d^{(1)} i} \cdots L^{\star}_{\mathfrak d^{(|\fb'|)} i} t_i^{(\fb_1)} .
 \label{eq:target1}
\end{align}
From the assumption of $\nu$-orthogonality of $G^{(\fb_1)}$, we have
\begin{align}
	&\text{ if } |\fa \cup_j \mathfrak d^{(j)}| \ge 2 	
\quad \Longrightarrow \quad 
	&2^{|\fa \cup_j \mathfrak d^{(j)}| } 
 \sum_i 
 L_{a i} L^{\star}_{\mathfrak d^{(1)} i} \cdots L^{\star}_{\mathfrak d^{(|\fb'|)} i} t_i^{(\fb_1)} 
 = 0 \mod 2^{\nu+1} .
\end{align}
Comparing the exponents of 2 of the last two expressions,
we see that only the terms with $\fa \cup_j \mathfrak d^{(j)} = \fa = \{a\}$ in \eqref{eq:target1} may survive.
(That is, $N$ can be substituted without higher order terms in \eqref{eq:gExpansion}.)
Hence, \eqref{eq:target} is equivalent to
\begin{widetext}
\begin{align}
& \begin{cases}
M^\star_{\fb a} - \sum_i L_{a i} t_i^{(\fb_1)} = 0 \mod 2^{\nu},	& \text{ if } \fb' = \emptyset, \\
2^{|\fb'|} M^\star_{\fb a} - 2^{|\fb'|} \sum_i L_{a i} \sum_j M^\star_{\fb' j } L_{ji} t_i^{(\fb_1)} = 0 \mod 2^{\nu}, &{\text{otherwise},}
\end{cases}
\nonumber\\
&\Longleftrightarrow 
 2^{|\fb'|} M^\star_{\fb a} - 2^{|\fb'|} \sum_i L_{a i} M^\star_{\fb' a} t_i^{(\fb_1)} = 0 \mod 2^{\nu} ,
\end{align}
\end{widetext}
where the last line follows from the $\nu$-orthogonality of $L$.
We know the last line is true since $\sum_i L^{(\fb_1)}_{a i} t_i^{(\fb_1)} = M_{\fb_1 a} \mod 2^\nu$
by \eqref{eq:LLM} and $\nu$-orthogonality of $G^{(\fb_1)}$.

{\it $\bullet$ $\fa = \fc = \emptyset$}:
By assumption, $|\fb| \ge 2$. Let $\fb_1$ be the smallest element of $\fb = \{\fb_1\} \sqcup \fb'$.
Except for the part with the tensor factor $(11)$,
we must show
\begin{align}
 2^{|\fb|-1} \sum_k  M^\star_{\fb k} - 2^{|\fb|-1} \sum_i N^{\star}_{\fb' i} t_i^{(\fb_1)} = 0 \mod 2^{\nu+1}
 \label{eq:target2}
\end{align}
where $L = L^{(\fb_1)}$, $S=S^{(\alpha)}$ and $N_{bi} = \sum_j M_{b j} L^{(\fb_1)}_{ji} \mod 2$.
Expanding $N^\star_{\fb' i}$ using \eqref{eq:gExpansion},
the second term becomes a $\pm$-sum of terms
\begin{align}
 2^{|\fb'| + \sum_{j=1}^{|\fb'|}( |\mathfrak d^{(j)}|-1)}
 \sum_i
 L^{\star}_{\mathfrak d^{(1)} i} \cdots L^{\star}_{\mathfrak d^{(|\fb'|)} i} t_i^{(\fb_1)} .
\end{align}
From the assumption of $\nu$-orthogonality of $G^{(\fb_1)}$, we have
\begin{align}
&\text{ if } |\cup_j \mathfrak d^{(j)}| \ge 2	
\quad \Longrightarrow \quad
2^{|\cup_j \mathfrak d^{(j)}| } 
 \sum_i 
 L^{\star}_{\mathfrak d^{(1)} i} \cdots L^{\star}_{\mathfrak d^{(|\fb'|)} i} t_i^{(\fb_1)}
 = 0 \mod 2^{\nu+1}.
\end{align}
Comparing the exponents of 2 of the last two expressions,
we see that only the terms with $\mathfrak d^{(j)} = \{d\}$ in \eqref{eq:target2} may survive.
Hence, \eqref{eq:target2} is equivalent to
\begin{align}
 2^{|\fb'|} \sum_k M^\star_{\fb k} - 2^{|\fb'|} \sum_i M^\star_{\fb' d} L_{di} t_i^{(\fb_1)} = 0 \mod 2^{\nu+1} ,
\end{align}
but we know this is satisfied since $\sum_i L_{d i} t_i^{(\fb_1)} = M_{\fb_1 d} \mod 2^\nu$
by \eqref{eq:LLM}.

We have shown the orthogonality condition (iii) of Lemma~\ref{lem:nu-orthogonality},
and completed the proof.
\end{proof}

\section{Transversal $T_\nu$ gates and Towers of codes}
\label{sec:tgates_codes}

In this section, we define a CSS code based on $\nu$-orthogonal matrices.
The resulting code will admit a transversal gate at the $\nu$-th level of Clifford hierarchy,
and can be used in a distillation protocol for the $(\nu+1)$-th level.
The results of previous sections will lead to a recursive construction of codes
of increasing levels of divisibility.

Let $\nu \ge 2$ be an integer, and define unitary matrices
\begin{align}
 T_\nu = \begin{pmatrix} 1 & 0 \\ 0 & \exp( 2\pi i/2^\nu ) \end{pmatrix}, \quad
 Z = \begin{pmatrix} 1 & 0 \\ 0 & -1 \end{pmatrix}, \quad
 X = \begin{pmatrix} 0 & 1 \\ 1 & 0 \end{pmatrix}.
\end{align}

\begin{definition}
Suppose we have a binary matrix $S$ with $n$ columns 
where the rows form a $(\nu,t)$-null set for some coefficient vector $t$,
and a $(\nu,t)$-orthogonal extension $G = \begin{bmatrix} L \\ S \end{bmatrix}$
where $L$ is $(\nu,t)$-orthonormal.
Let $\calS$ be the $\FF_2$-span of the rows of $S$,
and $\calG$ be the $\FF_2$-span of the rows of $G$.
We define a CSS code $\calQ[G,t]$, called a {\bf divisible (quantum) code at level $\nu$},
 on $n$ qubits by setting $\calS$ as $X$-stabilizers,
the rows of $L$ as the $X$- and $Z$-logical operators,
and $\calG^\perp$ (the orthogonal complement with respect to $2$-orthogonality, 
i.e., with respect to the usual dot product)
as the $Z$-stabilizers.
\end{definition}
Since each row of $L$ has odd weight, and is orthogonal (with respect to the usual dot product)
to any other row, the CSS code $\calQ[G,t]$ has $k$ logical qubits 
where $k$ is the number of rows in $L$.
For the code distance of $\calQ[G,t]$, we note the following.
\begin{lemma}
\[ 
d_X[G,S] := \min_{f \in \calG \setminus \mathcal S} |f| \ge  
 \min_{f \in \mathcal S^\perp \setminus \calG^\perp} |f| =: d_Z[G,S] .
\]
\label{lem:xdistance}
\end{lemma}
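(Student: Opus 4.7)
The plan is to read off both minima as the standard CSS $X$- and $Z$-distances of $\calQ[G,t]$, and then to deduce the inequality $d_X \geq d_Z$ from the set inclusion $\calG \setminus \calS \subseteq \calS^\perp \setminus \calG^\perp$, which immediately gives the bound since the left-hand minimum is over a subset of the right-hand domain.

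The identifications are standard CSS facts. An $X$-only Pauli $X^e$ commutes with every $Z$-stabilizer $Z^g$ ($g \in \calG^\perp$) iff $e \in (\calG^\perp)^\perp = \calG$; it is a nontrivial logical $X$ iff in addition $e \notin \calS$. Minimizing $|e|$ yields $d_X = \min_{f \in \calG \setminus \calS} |f|$. The mirror argument for $Z$-only Paulis yields $d_Z = \min_{f \in \calS^\perp \setminus \calG^\perp} |f|$.

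The containment $\calG \subseteq \calS^\perp$ will follow from $SS^T = 0 \bmod 2$ and $LS^T = 0 \bmod 2$. Both are $\FF_2$-reductions of the $(\nu,t)$-orthogonality hypothesis: the rows of $S$ form a $(\nu,t)$-null set, hence are pairwise and self-$(\nu,t)$-orthogonal, and every row of $L$ is $(\nu,t)$-orthogonal to every row of $S$ because $G$ is $(\nu,t)$-orthogonal. Since the components of $t$ are odd, the condition $\sum_i v_i w_i t_i \equiv 0 \bmod 2$ reduces to $\sum_i v_i w_i \equiv 0 \bmod 2$, which is plain $\FF_2$-orthogonality. Thus $\calG$ sits inside $\calS^\perp$.

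The step with actual content is showing that no element of $\calG \setminus \calS$ lies in $\calG^\perp$. I first check that the rows of $G$ are $\FF_2$-linearly independent: each row of $L$ has odd weight since $\norm{L_{k*}}_{\nu,t} \equiv 1 \bmod 2$ and $t$ has odd components, and each row of $L$ is $\FF_2$-orthogonal to every other row of $G$ by the previous paragraph, so pairing any purported dependency with such a row forces all $L$-coefficients to vanish, after which $\FF_2$-linear independence of the rows of $S$ finishes the job. Consequently every $f \in \calG$ admits a unique expansion $f = s + \sum_{k \in K} L_{k*} \bmod 2$ with $s \in \calS$, and $f \notin \calS$ iff $K \neq \emptyset$. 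For any $k_0 \in K$ one computes $f \cdot L_{k_0*} \equiv |L_{k_0*}| \equiv 1 \bmod 2$, using mod-$2$ orthogonality of distinct rows of $L$ and of $\calS$ to $L$. So $f \notin \calG^\perp$, establishing the inclusion and hence $d_X \geq d_Z$. The only real obstacle is bookkeeping: tracking exactly which fragments of $(\nu,t)$-orthogonality and $(\nu,t)$-orthonormality descend to the mod-$2$ statements I need, and verifying that odd-ness of $t$ is the sole input that turns the level-$\nu$ conditions into the plain $\FF_2$ ones used in the argument.
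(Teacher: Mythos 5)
Your proof is correct and follows essentially the same route as the paper: establish $\calG \setminus \calS \subseteq \calS^\perp \setminus \calG^\perp$ (via $\calG \subseteq \calS^\perp$ and the fact that any $f\in \calG\setminus\calS$ hits some row of $L$ with odd overlap), then note the right-hand minimum is over a superset. The paper leaves the mod-$2$ reductions of $(\nu,t)$-orthogonality and the linear-independence bookkeeping implicit, but your expansion of those steps matches its intended argument.
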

\begin{proof}
We claim that $\calG \setminus \calS \subseteq \calS^\perp \setminus \calG^\perp$,
implying that the minimum on the right-hand side is taken over a larger set.
Indeed, if $f \in \calG \setminus \calS$, 
then $f$ must have a nonzero dot product with some row of $L$
since $L$ is orthonormal.
This implies $f \notin \calG^\perp$. 
It is obvious that $\calG \subseteq \calS^\perp$
by the assumption on $G$.
\end{proof}
	
We now look at transversal gates.
As an extension of the calculation of \cite{BravyiHaah2012Magic} 
(see also \cite{LandahlCesare2013Tgate}),
we have
\begin{lemma}
On the code space of $\calQ[G,t]$, a transversal $T_\nu$ gate is the product of  
$\tilde T_\nu$ over all logical qubits, 
i.e., $\bigotimes_i T_\nu^{t_i} = \prod_{j=1}^k \tilde T_\nu$.
\label{lem:lopT}
\end{lemma}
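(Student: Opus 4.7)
The plan is to diagonalize the transversal gate on the computational basis, identify the phase produced on a codeword as a level-$\nu$ norm, and then invoke Lemma~\ref{lem:nu-orthogonality}(ii) to split that norm into contributions from $S$ and from $L$.

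First, I would fix an orthonormal basis of logical computational states for $\calQ[G,t]$. Since the $X$-stabilizer group is the $\FF_2$-span of the rows of $S$ and the rows of $L$ serve as the $X$-logical operators, a canonical basis of codewords is
\begin{align}
|\overline{x}\rangle \;=\; \frac{1}{\sqrt{|\calS|}}\sum_{s\in \calS}\, |s + xL\rangle, \qquad x\in \FF_2^k,
\end{align}
where $xL = \sum_{j=1}^k x_j L_{j*}$ denotes the $\FF_2$-linear combination of rows of $L$ selected by $x$. The logical $Z_j$ acts by the sign $(-1)^{x_j}$ on $|\overline{x}\rangle$, so that the logical $T_\nu$ on qubit $j$ multiplies $|\overline{x}\rangle$ by $\exp(2\pi i x_j/2^\nu)$; hence $\prod_j \tilde T_\nu$ multiplies $|\overline{x}\rangle$ by $\exp(2\pi i (\sum_j x_j)/2^\nu)$.

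Next I would compute $\bigotimes_i T_\nu^{t_i}$ on a computational basis vector $|v\rangle$ with $v\in\FF_2^n$: it multiplies $|v\rangle$ by $\exp(2\pi i \sum_i v_i t_i/2^\nu) = \exp(2\pi i \|v\|_{\nu}/2^\nu)$. Taking $v = s + xL\bmod 2$ with $s\in\calS$, I would apply Lemma~\ref{lem:nu-orthogonality}(ii) to the $(\nu,t)$-orthogonal set given by the rows of $G$, yielding
\begin{align}
\|s + xL \bmod 2\|_\nu \;=\; \|s\|_\nu + \sum_{j=1}^k x_j \|L_{j*}\|_\nu \;=\; \sum_{j=1}^k x_j \pmod{2^\nu},
\end{align}
where the last equality uses $(\nu,t)$-nullity of $S$ (so every $s\in\calS$ has $\|s\|_\nu = 0$) and $(\nu,t)$-orthonormality of $L$ (so $\|L_{j*}\|_\nu = 1$). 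Crucially the resulting phase is independent of $s\in\calS$, so pulling it out of the sum defining $|\overline{x}\rangle$ gives
\begin{align}
\bigotimes_i T_\nu^{t_i}\,|\overline{x}\rangle \;=\; \exp\!\left(\frac{2\pi i}{2^\nu}\sum_{j=1}^k x_j\right)|\overline{x}\rangle \;=\; \Bigl(\prod_{j=1}^k \tilde T_\nu\Bigr)|\overline{x}\rangle,
\end{align}
which is the claimed identity on a basis, hence on the whole code space.

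The only delicate point is the bookkeeping around the Lemma~\ref{lem:nu-orthogonality}(ii) application: it requires the rows being combined to form a $(\nu,t)$-orthogonal set, and one must be sure that every $s\in\calS$ admits an expansion as an $\FF_2$-sum of rows of $S$ so that nullity of those rows transfers to nullity of $s$ (the final sentence of Lemma~\ref{lem:nu-orthogonality} handles this exactly). Aside from that, the argument is a direct translation of the hypotheses on $G$ into the phase produced by the transversal gate.
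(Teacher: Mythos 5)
Your proof is correct and follows essentially the same route as the paper: expand the codeword over the $X$-stabilizer group, read off the transversal phase as a level-$\nu$ norm, and apply Lemma~\ref{lem:nu-orthogonality}(ii) together with the $(\nu,t)$-nullity of $S$ and $(\nu,t)$-orthonormality of $L$ to reduce the phase to $\exp(2\pi i \sum_j x_j / 2^\nu)$. The paper's version is terser but invokes exactly the same ingredients; your explicit remark about transferring nullity from the rows of $S$ to all of $\calS$ via the final clause of Lemma~\ref{lem:nu-orthogonality} is a correct and worthwhile clarification of a step the paper leaves implicit.
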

\begin{proof}
For a bit string $(x_1,\ldots,x_k)$, the logical state in the computational basis
transforms as
\begin{widetext}
\begin{align}
 \left(\bigotimes_i T_\nu^{t_i} \right) \ket{\tilde x_1, \ldots, \tilde x_k} 
&= \frac{1}{\sqrt{|\calS|}} \sum_{s \in \calS} e^{2\pi i \sum_i f_i t_i /2^\nu} \ket{f = s + x_1 L_{1*} + \cdots + x_k L_{k*} \mod 2 }\nonumber\\
&= \frac{1}{\sqrt{|\calS|}}\sum_{s \in \calS} e^{2\pi i \sum_i x_i / 2^\nu} \ket{f = s + x_1 L_{1*} + \cdots + x_k L_{k*} \mod 2 }\nonumber\\
&= \left( \prod_{j=1}^k \tilde T_\nu \right) \ket{\tilde x_1, \ldots, \tilde x_k}.\label{eq:Taction}
\end{align}
\end{widetext}
The second equality is from Lemma~\ref{lem:nu-orthogonality}~(ii) and the orthonormality of $L$.
We see that the transversal $T_\nu$ with exponents matching the coefficient vector, 
implements logical $\tilde T_\nu$ on all logical qubits.
\end{proof}

Another logical action can be induced by a transversal gate;
namely, a fault-tolerant measurements in the basis of magic states 
$\ket{\psi_{\nu+1}} = (T_{\nu+1} X T_{\nu+1}^{-1}) \ket{\psi_{\nu+1}}$ and $Z\ket{\psi_{\nu+1}}$.
Before we present the induced action in Lemma~\ref{lem:ftmeasure} below,
we state a part of calculation in the following.
\begin{lemma}
$\bar X = X^{\otimes n}$ is a logical operator for $\calQ[G,t]$ where $n$ is the code length
if and only if $\vec 1 = \sum_{r} L_{r*} \mod 2$,
in which case the condition \eqref{eq:Ltt} in Theorem~\ref{thm:levelLifting} holds for $G$.
If $\vec 1 = \sum_{r} L_{r*}^{(\alpha)} \mod 2$ up to $S^{(\alpha)}$
for every $\alpha$ in Theorem~\ref{thm:levelLifting},
then $\vec 1 = \sum_r (\CF_\ell)_{r*} \mod 2$ up to $\CF_\mathrm{in,out}$.
\label{lem:1eqSumLog}
\end{lemma}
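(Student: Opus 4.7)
The statement splits into three sub-claims: the iff characterization of when $X^{\otimes n}$ is a logical operator, the derivation of condition~\eqref{eq:Ltt} from it, and the tower-level counterpart for $\CF_\ell$. I would address them in that order.

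For the iff, the plan is first to observe that $X^{\otimes n}$ trivially commutes with every $X$-stabilizer, and commutes with a $Z$-stabilizer $g \in \calG^\perp$ exactly when $\vec 1 \cdot g \equiv 0 \bmod 2$. Hence $X^{\otimes n}$ normalizes the stabilizer group iff $\vec 1 \in (\calG^\perp)^\perp = \calG$, i.e.\ $\vec 1 = \sum_r x_r L_{r*} + s \bmod 2$ for some $x_r \in \FF_2$ and $s \in \calS$. To pin down the $x_r$, I would dot this identity with $L_{k*}$ and invoke $LL^T \equiv I$ and $SL^T \equiv 0 \bmod 2$---both immediate mod-$2$ reductions of the $\nu$-orthonormality of $L$ and the $\nu$-orthogonality of $G$, using that each $t_i$ is odd---to obtain $x_k \equiv |L_{k*}| \bmod 2$. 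The same mod-$2$ reduction applied to $\sum_i L_{ki} t_i \equiv 1$ forces each row of $L$ to have odd weight, so $x_k = 1$ for every $k$. The converse direction is immediate: if $\vec 1 = \sum_r L_{r*} + s \bmod 2$ with $s \in \calS$, then $\vec 1 \in \calG$ and $X^{\otimes n}$ is logical.

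For the norm identity, I would apply the $(\nu,t)$-norm to both sides of $\vec 1 = \sum_r L_{r*} + s \bmod 2$. The left-hand side equals $\sum_i t_i \bmod 2^\nu$. For the right-hand side, I would invoke Lemma~\ref{lem:nu-orthogonality}(ii)---whose hypothesis is met because the combined rows of $G$ are $\FF_2$-linearly independent (by the $LL^T \equiv I$ and $SL^T \equiv 0$ argument plus independence of the rows of $S$) and $\nu$-orthogonal---to obtain $\sum_r \|L_{r*}\|_{\nu,t} + \|s\|_{\nu,t} = k + 0 = k \bmod 2^\nu$, where $k$ is the number of rows of $L$ and the second term vanishes by $\nu$-nullity of $\calS$. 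Separately, $\sum_k \sum_i L_{ki} t_i = k \bmod 2^\nu$ follows directly from $\nu$-orthonormality of $L$; comparing the two computations yields \eqref{eq:Ltt}.

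For the tower-level statement, I would compute $\sum_r (\CF_\ell)_{r*}$ block by block. The leading block (the identity $I_\nout$) contributes $\vec 1_\nout$. Each subsequent block contributes $(11) \otimes \sum_r L^{(\alpha)}_{r*} = (11) \otimes \vec 1_{\nin^{(\alpha)}} + (11) \otimes s^{(\alpha)}$ by the hypothesis, with some $s^{(\alpha)}$ in the row span of $S^{(\alpha)}$; the first summand is exactly $\vec 1$ restricted to that block, and the correction $(11) \otimes s^{(\alpha)}$ is a linear combination of rows of $(11) \otimes S^{(\alpha)} \subset \CF_\mathrm{in}$, which proves the claim up to $\CF_\mathrm{in,out}$. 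The hardest step is the forward direction of the first sub-claim, where the orthonormality structure of $L$ must be used delicately to force every $x_k = 1$; once that is in hand, the second and third sub-claims follow mechanically from the level-$\nu$ norm calculus and the block structure of $\CF_\ell$.
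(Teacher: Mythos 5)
Your proposal is correct and matches the paper's own (much terser) proof in both structure and substance: identify $\vec 1 \in \calG$ as the condition for $\bar X$ to commute with the $Z$-stabilizers, pin down the coefficients via the odd weight of each row of $L$ and the mod-2 orthonormality $LL^T\equiv I$, $SL^T\equiv 0$, then apply Lemma~\ref{lem:nu-orthogonality}(ii) (with $\nu$-nullity of $\calS$ absorbing the stabilizer correction) to get \eqref{eq:Ltt}, and finish the third claim by a block-wise sum of the rows of $\CF_\ell$. The only stylistic difference is that you route the norm identity through the intermediate value $k$ on both sides, which is unnecessary (Lemma~\ref{lem:nu-orthogonality}(ii) already gives \eqref{eq:Ltt} directly), and your block-wise computation in fact shows the slightly stronger ``up to $\CF_\mathrm{in}$ only,'' which subsumes the ``up to $\CF_\mathrm{in,out}$'' stated; neither affects correctness.
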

\begin{proof}
Since $Z$-stabilizers corresponds to $G^\perp$, it follows that 
$\vec 1$ is in the $\FF_2$-span of the rows of $G$.
Since $\vec 1 \cdot L_{r*} = 1 \mod 2$ for any $r$,
we see that $\vec 1 = \sum_r L_{r*} \mod 2$ up to $S$.
Using (ii) of Lemma~\ref{lem:nu-orthogonality}, 
we have $\norm{ \vec 1}_{\nu,t} = \sum_r \norm{L_{r*}}_{\nu,t} \mod 2^\nu$,
which is \eqref{eq:Ltt}.
	
All other claims are obvious.
\end{proof}

\begin{lemma}
Suppose $\bar X = X^{\otimes n}$ is a logical operator for $\calQ[G,t]$ 
where $n$ is the code length.
Then, $\bigotimes_j (T_{\nu+1}^{t_j} X T_{\nu+1}^{-t_j} )$ becomes
the product of logical operators $T_{\nu+1} X T_{\nu+1}^{-1}$ over all logical qubits 
up to an overall phase factor $\pm 1$.
\label{lem:ftmeasure}
\end{lemma}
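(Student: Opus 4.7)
The plan is to apply the physical operator $U = \bigotimes_j(T_{\nu+1}^{t_j} X T_{\nu+1}^{-t_j})$ directly to a logical computational basis state and match it against the action of $\prod_k \tilde T_{\nu+1,k}\,\tilde X_k\,\tilde T_{\nu+1,k}^{-1}$. I first rewrite $U = D\, X^{\otimes n}\, D^{-1}$ where $D = \bigotimes_j T_{\nu+1}^{t_j}$, then expand $\ket{\tilde x_1,\ldots,\tilde x_\kappa}$ as in \eqref{eq:Taction}. Conjugation by $D$ multiplies each $X^{\otimes n}\ket{f} = \ket{\bar f}$ by the phase $\exp\bigl[2\pi i(\bar f\cdot t - f\cdot t)/2^{\nu+1}\bigr]$, where $v\cdot t = \sum_j v_j t_j$ is the integer-valued dot product. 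Using $\bar f\cdot t = \vec 1\cdot t - f\cdot t$, this collapses to $\exp[2\pi i (\vec 1\cdot t)/2^{\nu+1}]\cdot\exp[-2\pi i (f\cdot t)/2^\nu]$.

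The central step is to evaluate $f\cdot t$ modulo $2^\nu$ on each term $f = s + \sum_k x_k L_{k*} \bmod 2$. Since $S$ is $(\nu,t)$-null and the rows of $L$ are $(\nu,t)$-orthonormal, Lemma~\ref{lem:nu-orthogonality}(ii) gives $f\cdot t \equiv \sum_k x_k \pmod{2^\nu}$, uniformly in $s$, so the phase is $s$-independent and factors out of the sum. Because $\bar X = X^{\otimes n}$ is assumed logical, Lemma~\ref{lem:1eqSumLog} supplies $s_0 \in \calS$ with $\vec 1 = \sum_k L_{k*} + s_0 \bmod 2$, and the reindexing $s \mapsto s+s_0$ converts $\sum_s \ket{\overline{f(s,x)}}$ into $\sum_{s'}\ket{f(s',\bar x)}$. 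Combining, $U\ket{\tilde x} = e^{2\pi i (\vec 1\cdot t)/2^{\nu+1}}\, e^{-2\pi i |x|/2^\nu}\, \ket{\tilde{\bar x}}$ with $|x|=\sum_k x_k$.

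Separately, each logical factor $T_{\nu+1} X T_{\nu+1}^{-1}$ acts on $\ket{x_k}$ as $e^{2\pi i (1-2x_k)/2^{\nu+1}} \ket{\bar x_k}$, so the product over the $\kappa$ logical qubits sends $\ket{\tilde x}$ to $e^{2\pi i \kappa/2^{\nu+1}}\, e^{-2\pi i |x|/2^\nu}\,\ket{\tilde{\bar x}}$. The two actions agree in their $x$-dependent phase and in the basis permutation; they differ only by the $x$-independent constant $e^{2\pi i ((\vec 1\cdot t) - \kappa)/2^{\nu+1}}$. To pin this to $\pm 1$, I invoke \eqref{eq:Ltt}, which Lemma~\ref{lem:1eqSumLog} provides under the current hypothesis, together with the $(\nu,t)$-orthonormality of $L$ (so that $\sum_i L_{ki} t_i \equiv 1 \bmod 2^\nu$ for each $k$): summing over $k$ yields $\vec 1\cdot t = \sum_i t_i \equiv \kappa \pmod{2^\nu}$, whence the exponent is an integer multiple of $\pi i$.

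The main obstacle is the delicate two-modulus bookkeeping: the physical side naturally produces phases modulo $2^{\nu+1}$, while the $(\nu,t)$-orthogonality data of the level-$\nu$ code pins things down only modulo $2^\nu$. This one-bit gap is exactly the reason the conclusion must be stated up to an overall $\pm 1$ rather than as an equality. Everything else is a routine expansion and relies only on Lemmas~\ref{lem:nu-orthogonality} and~\ref{lem:1eqSumLog} together with elementary single-qubit identities.
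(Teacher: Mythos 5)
Your proof is correct, but it takes a more hands-on route than the paper. The paper's proof is a short operator-algebra argument: it uses the single-qubit identity $T_{\nu+1}^{m} X T_{\nu+1}^{-m} = e^{-2\pi i m/2^{\nu+1}}\, T_\nu^m X$ to factor the physical operator as $e^{-2\pi i (\sum_j t_j)/2^{\nu+1}} \bigl(\bigotimes_j T_\nu^{t_j}\bigr) X^{\otimes n}$, then simply cites \eqref{eq:Taction} (Lemma~\ref{lem:lopT}) for the $\bigotimes T_\nu^{t_j}$ factor and Lemma~\ref{lem:1eqSumLog} for $X^{\otimes n}$, applies the same identity again to the logical $\tilde T_\nu \tilde X$ on each encoded qubit, and reads off that the residual scalar is $\pm1$. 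You instead go all the way down to the stabilizer-coset expansion of $\ket{\tilde x}$, re-derive the essential content of Lemma~\ref{lem:lopT} via Lemma~\ref{lem:nu-orthogonality}(ii), and handle the $X^{\otimes n}$ action by the explicit reindexing $s\mapsto s+s_0$ with $\vec1 = \sum_k L_{k*} + s_0 \bmod 2$ — effectively re-proving from scratch both lemmas that the paper delegates to. Both approaches land on the same $x$-independent residual phase $e^{2\pi i(\vec1\cdot t - \kappa)/2^{\nu+1}}$ and use the same combination of \eqref{eq:Ltt} and $L$-orthonormality to force $\vec1\cdot t \equiv \kappa \pmod{2^\nu}$, hence $\pm1$. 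The paper's route is shorter and more modular; yours is more self-contained and makes visible exactly where each piece of the $(\nu,t)$-structure enters, which is pedagogically useful even if redundant. Your closing remark about the ``one-bit gap'' being the reason the statement is only up to $\pm1$ is also the right intuition and is implicit in the paper's final sentence.
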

\begin{proof}
Observe that $T_{\nu+1}^{m} X T_{\nu+1}^{-m} 
= \mathrm{diag}(e^{- 2\pi i m/ 2^{\nu+1}},e^{2\pi i m/ 2^{\nu+1}}) X 
= e^{-2\pi i m / 2^{\nu+1}} T_\nu^m X$
for any integer $m$.
Since Lemma~\ref{lem:1eqSumLog} says 
$\bar X$ is the product of the logical $X$ on all logical qubits,
it suffices to determine the action of 
$\bigotimes_j ( e^{-2\pi i t_j / 2^{\nu+1}} T_\nu^{t_j} ) = e^{-2\pi i \sum_j t_j / 2^{\nu+1}} \bigotimes_j T_\nu^{t_j}$
on the code space.
From \eqref{eq:Taction} we know $\bigotimes_j T_\nu^{t_j}$ is equal to $\prod_\ell \tilde T_{\nu}$ on the code space,
where $\ell$ ranges over all logical qubits.
Since $\bar X$ is logical, \eqref{eq:Ltt} holds by Lemma~\ref{lem:1eqSumLog},
and hence $e^{-2\pi i \sum_j t_j / 2^{\nu+1}}$ is equal to $\pm \prod_\ell e^{-2\pi i / 2^{\nu+1}} $.
\end{proof}

We are prepared to prove the main result.

\begin{theorem}
Let $s , \nu, \mu \ge 2 $ be integers such that $\nu \ge \mu$.
Let $\calQ[G,t]$ be a CSS code with parameters $[[n_\mu, k_\mu,2s+1]]$
where $G$ is $(\nu,t)$-orthogonal with respect to some odd coefficient vector $t$.
Assume that $X^{\otimes n_\mu}$ is a logical operator.
Then, there exists a CSS code with parameters $[[n_\nu, k_\nu, 2s+1]]$ that admits transversal $T_\nu$ as a logical operator
where
\begin{align}
 \frac{n_\nu}{k_\nu} = 
 (2s)^{\nu-\mu} \left(\frac{n_\mu}{k_\mu} + \frac{1}{2s-1}\right) - \frac{1}{2s-1}.
\end{align}
With $\nu=2$, the assumption is satisfied for every weakly self-dual CSS code of code distance $2s+1$.
\label{thm:tower}
\end{theorem}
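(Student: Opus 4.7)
The plan is to proceed by induction on $\nu - \mu \ge 0$, turning one application of Theorem~\ref{thm:levelLifting} into the per-step rate recursion $r_{\lambda+1} = 2s\, r_\lambda + 1$, which integrates to the closed-form rate in the statement. The base case $\nu = \mu$ is immediate: the input code already has the required parameters, Lemma~\ref{lem:lopT} gives $\bigotimes_i T_\mu^{t_i} \in \mathcal P_\mu$ as a logical operator, and the rate formula collapses to $n_\mu/k_\mu$.

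For the inductive step (lifting from level $\lambda$ to $\lambda+1$), I would feed Theorem~\ref{thm:levelLifting} with $\ncheck$ identical copies of the inner code $(G, t)$ and an outer check matrix $M$ of shape $\ncheck \times \nout$, where the ratio $\ncheck/\nout = s/k_\lambda$ produces the target rate $r_{\lambda+1} = 1 + 2\ncheck n_\lambda/\nout = 2s\, r_\lambda + 1$. For each $\alpha$, the matrix $L^{(\alpha)}$ is built by picking, for each output qubit in the support of $M_{\alpha *}$, a distinct logical of the inner code from its $(\lambda, t)$-orthonormal basis; the required identity~\eqref{eq:Ltt} is then supplied by Lemma~\ref{lem:1eqSumLog} from the assumption that $X^{\otimes n_\lambda}$ is a logical operator. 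Theorem~\ref{thm:levelLifting} returns a $(\lambda+1, t')$-orthogonal matrix $\CF$ defining $\calQ[\CF, t']$, in which Lemma~\ref{lem:1eqSumLog} also makes $X^{\otimes n_{\lambda+1}}$ a logical operator, so the inductive hypothesis is preserved. Lemma~\ref{lem:lopT} then yields the transversal $T_{\lambda+1}$ gate at the $(\lambda+1)$-th level of Clifford hierarchy.

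The hard step is verifying that the new code has distance $\ge 2s+1$. By Lemma~\ref{lem:xdistance} the $Z$-distance is the smaller one, and by Theorem~\ref{thm:completeCheckMatrix} it equals the error-suppression order of the distillation protocol encoded by $\CF$: one must show that every $y$ with $\CF_0 y = 0$ and $\CF_1 y \neq 0$ has Hamming weight $\ge 2s+1$. Partitioning $y$ into output and inner blocks and setting $e^{(\alpha)} := y^{(\alpha)}_\text{first} + y^{(\alpha)}_\text{second}$, the inner-syndrome constraint forces each $e^{(\alpha)}$ into the $X$-centralizer of the inner code; a case analysis on whether $e^{(\alpha)}$ represents a nontrivial inner logical (where the inner distance $\ge 2s+1$ dominates), an inner stabilizer, or zero, together with the outer sensitivity condition $|e|+2|Me| \ge 2s+1$ and the propagation law $y^{(0)} \mapsto y^{(0)} + \sum_\alpha L^{(\alpha)} e^{(\alpha)}$, yields $|y| \ge 2s+1$. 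Designing $M$ so that the target rate ratio and the sensitivity condition are simultaneously satisfied is the technical crux; concretely, one may take $\nout = c k_\lambda$ and $\ncheck = c s$ for a sufficient scale $c$, with $M$ drawn from the parity-check matrix of a classical linear code of distance $2s+1$ on $c k_\lambda$ bits.

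Finally, the last sentence of the theorem, concerning $\nu = 2$, follows from Lemma~\ref{lem:nu2basis}: for any weakly self-dual CSS code with $\FF_2$-linearly independent $X$-stabilizers $S$ and a normal basis of logicals $L$ (after puncturing one qubit of a hyperbolic code if needed), the lemma supplies an odd coefficient vector $t$ making $G = [L; S]$ a $(2, t)$-orthogonal matrix with $L$ orthonormal and $S$ null, while $X^{\otimes n}$ is automatically a logical operator of a normal code. This completes the hypothesis at $\mu = 2$ and seeds the induction.
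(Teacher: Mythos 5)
Your proposal follows the paper's route quite closely: induction on the level, invoking Theorem~\ref{thm:levelLifting} for the orthogonality of the lifted matrix, Lemma~\ref{lem:1eqSumLog} to propagate condition~\eqref{eq:Ltt}, Lemma~\ref{lem:lopT} for the transversal $T_{\lambda+1}$, Lemma~\ref{lem:xdistance} together with the $\CF_0 y = 0,\ \CF_1 y \neq 0$ characterization for the distance, and Lemma~\ref{lem:nu2basis} to seed the tower at $\mu=2$. The rate recursion $r_{\lambda+1}=2s\,r_\lambda+1$ and its closed form are also correct.

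The gap is in how you propose to build the outer matrix $M$. You assert that one may take $M$ to be ``the parity-check matrix of a classical linear code of distance $2s+1$,'' but that property only controls vectors $e$ in the kernel of $M$ (i.e., $Me=0 \Rightarrow |e|\ge 2s+1$). The sensitivity condition $|e|+2|Me|\ge 2s+1$ that you correctly state as the needed hypothesis is a lower bound on $|Me|$ for \emph{small-weight} $e$ \emph{outside} the kernel. In particular, taking $|e|=1$ forces every column of $M$ to have weight at least $s$, and $|e|=2$ forces every pair of columns to have symmetric-difference weight at least $s$; these column-regularity constraints are not implied by the null-space minimum distance. Worse, they are in tension with the rate requirement: with $\ncheck$ rows each of weight at most $k_\lambda$ (since a single use of the inner code can only test $k_\lambda$ magic states), the total weight of $M$ is at most $\ncheck k_\lambda = s\nout$, so average column weight is at most $s$; to meet the sensitivity bound you need \emph{every} column to have weight exactly $s$ and every row to have weight exactly $k_\lambda$. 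Achieving this simultaneously with $\ncheck = s\nout/k_\lambda$ is nontrivial and is exactly what the paper delegates to the graph-theoretic outer-code construction of Ref.~\cite{HHPW2017magic} (or the $s$-dimensional ``grid'' code of Ref.~\cite{enum} for $s\ge 3$); a generic parity-check matrix of a distance-$(2s+1)$ code will not do. Once $M$ is fixed correctly, the remainder of your distance sketch (partition of $y$, inner-syndrome constraint, error propagation via $L^{(\alpha)}$) is consistent with what is proved in Ref.~\cite{HHPW2017magic} and with the paper's appeal to Theorem~\ref{thm:completeCheckMatrix}.
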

\begin{proof}
We prove it by induction in $\nu$.
The base case $\nu = \mu$ is trivial,
since Lemma~\ref{lem:lopT} says $T_\nu$ can be implemented transversally.

For the induction step,
Lemma~\ref{lem:ftmeasure} implies there exists a circuit 
that measures the eigenvalue of 
$H_\mu^{\otimes k} = \bigotimes_{l=1}^k (T_{\mu+1} X T_{\mu+1}^{-1})$
with noisy $T_{\mu+1}$ gates on $\calQ[G,t]$.
Hence, there exists a distillation protocol for $T_{\mu+1}$ 
with the order of error reduction $2s+1$
that consumes $n_{\mu+1} = k_{\mu+1} + 2 n_\mu \ncheck$ 
noisy $T_{\mu+1}$ gates to produce $k_{\mu+1}$ magic states~\cite{HHPW2017magic}.
(See the review in Section~\ref{sec:review}.)
The distillation protocol is to check the eigenvalue of $H_\mu^{\otimes k}$
over $\ncheck$ subsets of input magic states,
so that no error pattern of weight $\le 2s$ is undetected.
It has been shown~\cite{HHPW2017magic} that
it is possible to specify the subsets of magic state to be tested
such that $k_{\mu+1} s = k_\mu \ncheck$.%
\footnote{
The choice of the subsets to be tested defines an ``outer code,''~\cite{HHPW2017magic}
which is $M$ in the matrix $\CF$ of Theorem~\ref{thm:completeCheckMatrix}.
The outer code $M$ is required to meet distance and sensitivity requirements
for the distillation protocol to have desired order of error reduction;
namely, $2|Me|+|e| \ge 2s+1$ for any nonzero vector $e$.
The condition $k_{\mu+1} s = k_\mu \ncheck$ 
means that the outer code must not be too redundant,
but the existence of such an outer code is 
given by a graph theoretic construction~\cite{HHPW2017magic}.
For $s \ge 3$, one can instead use a $s$-dimensional ``grid'' outer code~\cite{enum}.
}

Invoking Theorem~\ref{thm:completeCheckMatrix},
we find a complete check matrix $G'$ for the entire protocol.
The complete check matrix satisfies \ref{eq:Ltt} due to Lemma~\ref{lem:1eqSumLog},
and therefore is $(\mu+1,t')$-orthogonal by Theorem~\ref{thm:levelLifting}
for some $t'$.
We thus obtain a code $\calQ[G',t']$ on $n_{\mu+1}$ qubits, encoding $k_{\mu+1}$ logical qubits.
By Lemma~\ref{lem:xdistance}
the code distance of $\calQ[G',t']$ 
is the order of error reduction of the distillation protocol,
which is $2s+1$.
To complete the induction, we finally need to show that 
the transversal $X$ is a logical operator,
but we know this by the last statement of Theorem~\ref{thm:levelLifting}
and Lemma~\ref{lem:1eqSumLog}.

Any weakly self-dual CSS code $[[n_2,k_2,2s+1]]$ 
has the transversal $X$ as a logical operator.
Since the code distance $2s+1$ is odd,
such weakly self-dual CSS code must be normal.
By Lemma~\ref{lem:nu2basis} we find a coefficient vector $t$
with respect to which the code $[[n_2,k_2,2s+1]]$ admit transversal logical operator 
$\bigotimes_i (T^{t_i}_2 X)$.
\end{proof}

\section{Examples}
\label{sec:examples}

We briefly survey existing codes, and explain how they do or do not fit into our tower.

\subsection{With good ground codes}

We can start the tower with a good family of normal weakly self-dual CSS codes 
that is indexed by the code distance $d$ and has parameters 
$[[n=O(d), k=\Omega(d), d]]$~\cite{CalderbankShor1996Good,HHPW2017magic}.
It has been shown~\cite{HHPW2017magic}
that we can turn an instance of this code family
into a $H^{\otimes \le k}$-measurement gadget,
and by checking the eigenvalue of products of Hadamard
over $O(d)$ random subsets of $O(k)$ noisy magic states,
we can implement a distillation circuit with the order of error reduction $d$.
Proceding analogously to the proof of Theorem~\ref{thm:tower},
we obtain for each $\nu$ a family
of codes with parameters $[[O(d^{\nu-1}), \Omega(d), d]]$
that admit a transversal logical gate $T_\nu$ at the $\nu$-th level of Clifford hierarchy.

\subsection{Quantum Reed-Muller codes of code distance 3}

Though the statement of Theorem~\ref{thm:tower} does not include the case $s=1$,
here we claim that our tower starting with the Steane $[[7,1,3]]$ code
gives the family of quantum Reed-Muller codes of distance $3$,
that are used in \cite{BravyiKitaev2005Magic,LandahlCesare2013Tgate}.
In particular, the code at level $\nu=3$ is the code $[[15,1,3]]$ of \cite{BravyiKitaev2005Magic},
and our construction explicitly proves the equivalence between the distillation protocols 
by Knill~\cite{Knill2004a} and Bravyi-Kitaev~\cite{BravyiKitaev2005Magic}.

The quantum Reed-Muller codes of \cite{BravyiKitaev2005Magic,LandahlCesare2013Tgate} have
the $X$-stabilizer spaces as shortened first order Reed-Muller codes.
The generating matrix of the first order Reed-Muller code on $2^m$ bits can be recursively defined as
\begin{align}
G_{1,1} = \begin{bmatrix} 1 & 1 \\ 0 & 1 \end{bmatrix},\quad
G_{1,m > 1	} = \begin{bmatrix} G_{1,m-1} & G_{1,m-1} \\ 0 & \vec 1 \end{bmatrix}
\end{align}
and the generating matrix $S_m$ of the shortened code 
is obtained by removing the first row, which is $\vec 1$, and the first column from $G_{1,m}$.
(For a general definition of shortened and punctured classical codes, see \cite{MacWilliamsSloane}.)
The dual of $S_m$ contains $\vec 1$ of odd weight, and has the minimum distance 3.
Hence, if we declare that the rows of $S_m$ define the $X$-stabilizers,
$\vec 1$ defines both $X$ and $Z$ logical operators,
and $(\mathrm{row~span~} S_m + \{0,\vec 1\} )^\perp$ defines the $Z$-stabilizers,
then we arrive at a $[[2^m -1, 1, 3]]$ code.
Since every code word of $S_m$ has weight a multiple of $2^{m-1}$,
the quantum code $[[2^m -1, 1, 3]]$ admits a transversal 
$\mathrm{diag}(1,\exp(2\pi i / 2^{m-1})$-gate~\cite{LandahlCesare2013Tgate}.
Spelling out the generating matrix for $X$-stabilizers, we see
\begin{align}
G_{1,m+1}=\begin{bmatrix}
	1 & \vec 1 & 1 & \vec 1 \\
	0 & S_m    & 0 & S_m \\
	0 & 0      & 1 & \vec 1
\end{bmatrix} 
\xrightarrow{\text{shorten}}
S_{m+1} = \begin{bmatrix}
S_m    & 0 & S_m \\
0      & 1 & \vec 1
\end{bmatrix}.
\end{align}
This is the same as the $X$-stabilizer matrix
from our level lifting when $\nout =1$, up to permutation of columns and rows:
\begin{align}
	\begin{bmatrix}
		L \\
	\hline
		S
	\end{bmatrix} \longrightarrow 
	\begin{bmatrix}
		1 & L      & L \\
	\hline
		1 & \vec 1 & 0 \\
		0 & S      & S
	\end{bmatrix}.
\end{align}
Since $L = \vec 1$, the claim is proved.
Higher order Reed-Muller codes give nonexamples to our tower; see below.

\subsection{Triorthogonal codes}

There exists a distillation protocol using a CSS code 
where $\bigotimes T$ over the physical qubits
is not necessarily a logical operator.
This is based on the idea that $\bigotimes T$ over the physical qubits
need not be a logical operator by itself, 
but only has to be logical up to Clifford corrections.
A sufficient condition for this is so-called {\it triorthogonality}~\cite{BravyiHaah2012Magic}
imposed on a generating matrix of $X$-logical operators and $X$-stabilizers.
Consider a binary matrix $G$ consisting of even weight rows $G_0$ and odd weight rows $G_1$.
The triorthogonality requires that 
every pair and every triple of distinct rows of $G$ have even overlaps.
Given such a triorthogonal matrix, one defines a CSS code by declaring that
the rows of $G_0$ correspond to $X$-stabilizers, and $G^\perp$ to $Z$-stabilizers,
and each row of $G_1$ corresponds to both $X$- and $Z$-logical operators.
For example, one can consider a classical triply even code, and puncture it so that
$X$-stabilizers have weight a multiple of 8 and $X$-logical operators have odd weight,
to obtain a triorthogonal codes~\cite{BravyiKitaev2005Magic,rtrio}.
In this case, the Clifford correction is trivial.
More generally, the divisibility condition at level 3 that is relaxed to hold modulo 2, is precisely the triorthogonality.
Hence, any of our generalized divisible codes at level 3 is an example of triorthogonal codes.
In summary, we have inclusion relations among classes of codes, as depicted in Figure~\ref{fig:codesAt3}.
\begin{figure}[t]
	\includegraphics[width=.48\textwidth]{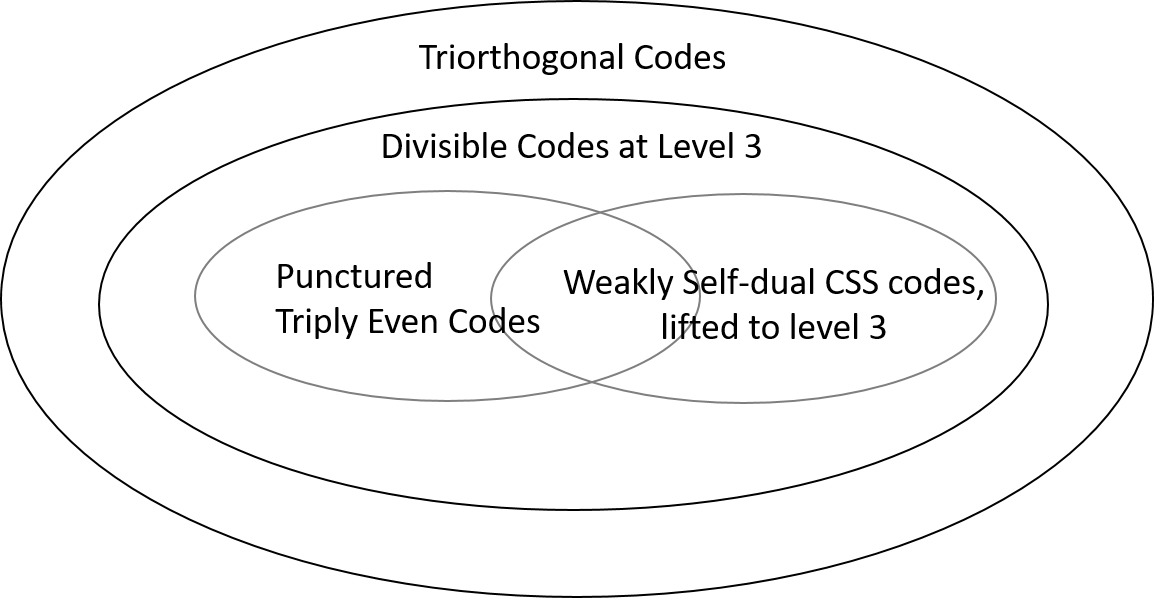}
\caption{Relations of codes that admit a transversal gate at Clifford hierarchy level 3.
The inner two sets are proper subsets of the set of all divisible codes at level 3,
but we do not know whether the set of all divisible codes at level 3 coincides with the set of all triorthogonal codes.
}
\label{fig:codesAt3}
\end{figure}
The inclusion of the two smallest sets are strict as we will show shortly,
but we do not know whether the outer-most inclusion is strict.

Now we show the family of triorthogonal codes $[[3k+8,k,2]]$ with $k$ even in \cite{BravyiHaah2012Magic}
is actually a divisible code at level 3, level-lifted from the $H$-code of Jones~\cite{Jones2012}.
The relation was loosely expected in \cite{Jones2012} by comparing the leading term
in the error probability when the codes are used in distillation protocols.
Since this triorthogonal code is not a punctured triply even code, 
the set of all weakly self-dual CSS codes, lifted to level 3,
is a proper subset of the set of all divisible codes at level 3.

The ``$H$-code~\cite{Jones2012}'' is a normal weakly self-dual CSS code defined for any even $k \ge 2$ by
\begin{align}
\begin{bmatrix}
	0 & 0 & 1 & 0 & \cdots & 0 & 1 & 1 \\
	0 & 0 & 0 & 1 & \cdots & 0 & 1 & 1 \\
	  &   &   &   & \ddots &   &   &   \\
	0 & 0 & 0 & 0 & \cdots & 1 & 1 & 1 \\
	\hline
	1 & 0 & 1 & 1 & \cdots & 1 & 1 & 0 \\
	0 & 1 & 1 & 1 & \cdots & 1 & 0 & 1 \\
\end{bmatrix}	
=
\begin{bmatrix}
	0_{k \times 2} & I_k & 1_{k \times 2} \\
	\hline
	I_2            & 1_{2 \times k} & I_2
\end{bmatrix}
\end{align}
where the rows of the upper submatrix represent the logical operators,
and the bottom the stabilizers.
The distillation protocol is such that $k$ qubits are checked once 
by the $H$-measurement routine implemented by the $H$-code.
Going through our level-lifting, we obtain
\begin{align}
\begin{bmatrix}
	I_k            & (11) \otimes 	(0_{k \times 2} ~~ I_k ~~ 1_{k \times 2}) \\
\hline
	1_{1 \times k} & (01) \otimes 1_{1 \times (k+4)} \\
	0_{2 \times k} & (11) \otimes (I_2      ~~1_{2 \times k} ~~ I_2)
\end{bmatrix}
=
\begin{bmatrix}
	I_k            & 0_{k \times 4} & I_k            & I_k           & 1_{k \times 4} \\
\hline
	1_{1 \times k} & 0011           & 0_{1 \times k} & 1_{1 \times k} & 0011 \\
0_{1 \times k} & 1010      & 1_{1 \times k} & 1_{1 \times k} & 1010\\
0_{1 \times k} & 0101      & 1_{1 \times k} & 1_{1 \times k} & 0101     
\end{bmatrix}
\end{align}
Permuting the columns and applying a row operation, we have
\begin{align}
\Longrightarrow
\begin{bmatrix}
     0_{k \times 4}& 1_{k \times 4} & I_k            & I_k            &		I_k            \\
\hline
	 0011  & 0011                   & 1_{1 \times k} & 0_{1 \times k} & 1_{1 \times k}  \\
	 1010  & 1010                   & 0_{1 \times k} & 1_{1 \times k} & 1_{1 \times k} \\
	 0101  & 0101                   & 0_{1 \times k} & 1_{1 \times k} & 1_{1 \times k}      
	\end{bmatrix}
\cong
	\begin{bmatrix}
	0_{k \times 4}& 1_{k \times 4} & I_k            & I_k            &		I_k            \\
\hline
	0011  & 0011                   & 1_{1 \times k} & 0_{1 \times k} & 1_{1 \times k}  \\
	1010  & 1010                   & 0_{1 \times k} & 1_{1 \times k} & 1_{1 \times k} \\
	1111  & 1111                   & 0_{1 \times k} & 0_{1 \times k} & 0_{1 \times k}      
\end{bmatrix},
\end{align}
which is the triorthogonal matrix of \cite{BravyiHaah2012Magic}.

\subsection{Doubling transformation}

The pipelining of Ref.~\cite{HHPW2017magic} using two-dimensional color codes~\cite{Bombin_2006}
that encodes one logical qubit, 
yields a family of codes with parameters $[[O(d^3),1,d]]$ that admits transversal logical $T_3$.
This family matches that of \cite{BravyiCross2015,Jones2016,JochymOConnorBartlett2016};
in fact, the ``doubling transformation'' of \cite{BravyiCross2015,Jones2016,JochymOConnorBartlett2016}
is a special case of our tower from level 2 to 3,
where the outer code's parity check matrix is $M=\begin{bmatrix} 1 \end{bmatrix}$,
which defines the zero code on one bit.
Examples of divisible codes at level 3 that can be obtained this way 
include
$[[15,1,3]]$~\cite{BravyiKitaev2005Magic} and $[[49,1,5]]$~\cite{BravyiHaah2012Magic}.

\subsection{Nonexamples}

Punctured Reed-Muller codes $RM(r,m)$ of $Z$-distance larger than 4 with $m > \nu r$
do not fit into our tower,
but belong to the set of divisible codes.
The reason is as follows.
On one hand, our level-lifted divisible codes are degenerate;
there always exists a $Z$-stabilizer of weight 4,
but the $Z$-distance can be larger than 4.
On the other hand, a punctured Reed-Muller code's $Z$-stabilizer 
can have minimal weight $2^{r+1}$, where $r$ is the order of the Reed-Muller code~\cite{sd}.

Another class of potential nonexamples includes
three or higher dimensional color codes~\cite{Bombin_2006,KubicaBeverland2015}.
The divisor and the scaling of code parameters of the codes from our tower
match (up to constants) with those of the color codes, 
but it is unclear whether the stabilizers of the code from our tower can be put on a lattice 
with the geometric locality of the generators obeyed.

\section{Discussion}
\label{sec:discussion}

We have unified two large classes of distillation protocols
by showing that every instance of the distillation protocols of Ref.~\cite{Knill2004a,HHPW2017magic}
based on Clifford measurements using normal codes
corresponds to a single generalized divisible code that admits a transversal non-Clifford logical gate.
The correspondence is exact, so one can analyze the error of the protocol in Ref.~\cite{HHPW2017magic}
via the resulting generalized divisible code.
Note that this unification does not cover all known protocols,
especially those that do not always accept the output even in the limit of noiseless non-Clifford 
gates~\cite{BravyiKitaev2005Magic,Reichardt2005,HowardDawkins2015};
if a protocol admits a representation in terms of a single block code with a transversal non-Clifford gate,
then the protocol must always accept the output upon noiseless non-Clifford inputs.

As a distillation protocol, the protocol of Ref.~\cite{HHPW2017magic}
can be thought of as ``spacetime'' protocol,
since not all noisy $T$-gates are applied initially;
in fact, noisy $T$-gates that are consumed initially are only a small fraction of all the noisy $T$ gates.
In contrast, the protocol using the corresponding divisible code as derived here
can be thought of as a ``space-only'' protocol,
since all noisy $T$-gates are used at the very beginning.
Direct comparison of the two approaches is not appropriate, 
without further information about qubit and Clifford gate implementations.
The spacetime protocol has a relatively small space footprint at the price of a large $T$-depth,
whereas the space-only protocol requires a large space overhead, but has a $T$-depth of 1.
The true depth including all Clifford gates of the space-only protocol is probably smaller
than that of the spacetime protocol.
This may not be obvious since the decoding routine (the inverse of encoding)
generally becomes more complicated as the code length increases.
However, a large portion of the stabilizers of the space-only protocol is block-diagonal 
(see Theorem~\ref{thm:completeCheckMatrix}),
which can be encoded/decoded in parallel.

In \cite{MeierEastinKnill2012Magic-state,HHPW2017magic},
there are distillation protocols for $T$ gates using hyperbolic inner codes.
Then, a natural question is whether a corresponding divisible code at level 3 exists.
We do not know the answer for this question, which is rather roughly posed,
but a similar method presented above will not work.
At least, there does not exist any triorthogonal code with parameters $[[10,2,2]]$,
which is conceivable by the protocol of \cite{MeierEastinKnill2012Magic-state};
the smallest triorthogonal code with code distance 2 is on 14 qubits~\cite[Sec.~5]{CampbellHoward2017b}.
The reason is that an error pattern on an inner code does not lead to a deterministic syndrome.
In an $H$-measurement routine using a hyperbolic code,
the relevant errors are $Y$ errors, but any logical $Y$ is not a tensor product of physical $Y$ operators,
and hence undetected $Y$ errors propagate to the logical qubits as some combination of $X$ and $Z$ errors.
These propagated errors are beyond the error model where only $Y$ errors are considered.
One could restore the error model by inserting a twirling channel $\rho \mapsto \frac 1 2 \rho + \frac 1 2 H \rho H$
after each $H$-measurement routine,
but then a deterministic error becomes a probabilistic error, 
and our complete check matrix will not be able to capture this behavior.

For a similar reason, it should be possible to find a divisible code at level 3,
starting with distillation protocols for $CS$ and $CCZ$ gates~\cite{rtrio}.
Another natural question is whether one can construct towers of codes for qudits
by relating distillation protocols based on Clifford measurements~\cite[App.~D]{HHPW2017magic}
and those based on divisible codes~\cite{Campbell2012,Howard2012,Campbell2014}.

\begin{acknowledgments}
I thank M.~B. Hastings and D. Poulin for encouraging discussions.
\end{acknowledgments}	

\bibliography{divisible-ref}
\end{document}